\newtheorem{thm}{Theorem}[section]
\newtheorem{lem}[thm]{Lemma}
\newtheorem{col}[thm]{Corollary}
\newtheorem{defn}[thm]{Definition}
\newtheorem{rem}[thm]{Remark}
\newenvironment{proof}{{\it Proof:}}{\hfill$\square$}
\begin{document}
%
% paper title
% Titles are generally capitalized except for words such as a, an, and, as,
% at, but, by, for, in, nor, of, on, or, the, to and up, which are usually
% not capitalized unless they are the first or last word of the title.
% Linebreaks \\ can be used within to get better formatting as desired.
% Do not put math or special symbols in the title.

\title{Attitude Tracking for Rigid Bodies Using Vector and Biased Gyro Measurements$^*$}

%\title{Inertial Measurements Based Exponential Attitude Tracking for Spacecraft with Unknown Inertia and Gyro-bias}

%
% author names and IEEE memberships
% note positions of commas and nonbreaking spaces ( ~ ) LaTeX will not break
% a structure at a ~ so this keeps an author's name from being broken across
% two lines.
 %to gain access to the first footnote area
% a separate \thanks must be used for each paragraph as LaTeX2e's \thanks
% was not built to handle multiple paragraphs
%

\author{Eduardo~Esp\' indola~and~Yu~Tang% <-this % stops a space
\thanks{$^*$ This work was supported in part by CONACyT under grant 253677 and PAPIIT-UNAM IN112421, and carried out in the National Laboratory of Automotive and Aerospace Engineering LN-INGEA.}
%\thanks{E. Esp\' indola is with the Programa de Maestr\'ia y Doctorado en Ingenier\'ia El\'ectrica (Control), National Autonomous University of Mexico, Mexico City, 04510 M\' exico (e-mail: eespindola@comunidad.unam.mx).}% <-this % stops a space
\thanks{E. Esp\' indola (e-mail: eespindola@comunidad.unam.mx) and Y. Tang (corresponding author, e-mail: tang@unam.mx) are with the Faculty of Engineering, National Autonomous University of Mexico, Mexico City, 04510 M\' exico. }}

% note the % following the last \IEEEmembership and also \thanks - 
% these prevent an unwanted space from occurring between the last author name
% and the end of the author line. i.e., if you had this:
% 
% \author{....lastname \thanks{...} \thanks{...} }
%                     ^------------^------------^----Do not want these spaces!
%
% a space would be appended to the last name and could cause every name on that
% line to be shifted left slightly. This is one of those "LaTeX things". For
% instance, "\textbf{A} \textbf{B}" will typeset as "A B" not "AB". To get
% "AB" then you have to do: "\textbf{A}\textbf{B}"
% \thanks is no different in this regard, so shield the last } of each \thanks
% that ends a line with a % and do not let a space in before the next \thanks.
% Spaces after \IEEEmembership other than the last one are OK (and needed) as
% you are supposed to have spaces between the names. For what it is worth,
% this is a minor point as most people would not even notice if the said evil
% space somehow managed to creep in.

% The paper headers
\markboth{Journal of \LaTeX\ Class Files,~Vol.~XX, No.~X, XXXXXX~2022}%
{Esp\' indola \MakeLowercase{\textit{et al.}}: Adaptive  Attitude Tracking for Spacecraft Using Vector Measurements}
% The only time the second header will appear is for the odd numbered pages
% after the title page when using the twoside option.
% 
% *** Note that you probably will NOT want to include the author's ***
% *** name in the headers of peer review papers.                   ***
% You can use \ifCLASSOPTIONpeerreview for conditional compilation here if
% you desire.

% If you want to put a publisher's ID mark on the page you can do it like
% this:
%\IEEEpubid{0000--0000/00\$00.00~\copyright~2015 IEEE}
% Remember, if you use this you must call \IEEEpubidadjcol in the second
% column for its text to clear the IEEEpubid mark.

% use for special paper notices
%\IEEEspecialpapernotice{(Invited Paper)}

% make the title area
\maketitle

% As a general rule, do not put math, special symbols or citations
% in the abstract or keywords.
\begin{abstract}
The rigid-body attitude tracking using vector and biased gyro measurements with unknown inertia matrix is studied in this note.  First, a gyro-bias observer with global exponential stability is designed. Then, an attitude tracking controller based on this observer is devised, ensuring almost global asymptotic stability and almost semiglobal exponential stability. 
A separation property of the combined observer-controller is proved. 
Lastly, an adaptive attitude tracking controller relying on a modified gyro-bias observer and with no over-parametrization is developed to deal with the unknown inertia matrix.  The proposed control schemes require neither an explicit attitude representation nor any attitude estimation, but only the measurement of at least two non-collinear known inertial reference vectors and biased gyro rate, which can be obtained by common low-cost IMU sensors. Simulations are included to illustrate the proposed adaptive controller under noisy measurements.
\end{abstract}

% Note that keywords are not normally used for peerreview papers.
\begin{IEEEkeywords}
Adaptive control, attitude tracking,  gyro bias estimation, vector measurements.
\end{IEEEkeywords}

% For peer review papers, you can put extra information on the cover
% page as needed:
% \ifCLASSOPTIONpeerreview
% \begin{center} \bfseries EDICS Category: 3-BBND \end{center}
% \fi
%
% For peerreview papers, this IEEEtran command inserts a page break and
% creates the second title. It will be ignored for other modes.
\IEEEpeerreviewmaketitle

\section{Introduction}
% The very first letter is a 2 line initial drop letter followed
% by the rest of the first word in caps.
% 
% form to use if the first word consists of a single letter:
% \IEEEPARstart{A}{demo} file is ....
% 
% form to use if you need the single drop letter followed by
% normal text (unknown if ever used by the IEEE):
% \IEEEPARstart{A}{}demo file is ....
% 
% Some journals put the first two words in caps:
% \IEEEPARstart{T}{his demo} file is ....
% 
% Here we have the typical use of a "T" for an initial drop letter
% and "HIS" in caps to complete the first word.
\IEEEPARstart{T}{he} attitude control for rigid bodies has been studied for decades under the assumption that attitude  measurements are  available \cite{wen1991attitude,tayebi2008unit,mayhew2011quaternion}.
In practice, the attitude is commonly estimated from vector measurements provided by inertial measurement units (IMUs) (see, e.g.,  \cite{tayebi2006attitude,mahony2008nonlinear,tong2020passive,hamrah2021finite}). However, use of attitude estimation in closed loop with a controller must be analyzed to ensure overall stability.  
Several approaches have been proposed to address attitude control using vector measurements since the pioneering work \cite{pounds2007attitude}, which can be categorized as direct and indirect. In the direct approach, no representation of the attitude and therefore no estimation of the attitude is needed, thus, inertial vector measurements are used directly in the control synthesis \cite{pounds2007attitude,forbes2013passivity,thakur2015gyro}. In the indirect approach, on the other hand, the attitude is estimated for control purposes \cite{khosravian2011rigid,mercker2011rigid,tayebi2013inertial,benallegue2018adaptive}. A combined approach is presented in \cite{lee2015global}. The direct approach  has  advantages that it simplifies the stability analysis, and most importantly reduces the undesired equilibria introduced by the attitude observer in the closed loop. For the attitude stabilization problem, vector measurements are sufficient  \cite{pounds2007attitude,tayebi2013inertial}. When the attitude tracking is considered, gyro-rate measurements are also needed \cite{forbes2013passivity,lee2015global}, and the gyro bias must be corrected \cite{mercker2011rigid,benallegue2018adaptive}. The problem becomes  even more complicated if in addition  the inertia matrix is unknown, thus the gyro bias and the inertia matrix must be estimated simultaneously, arising a nonlinear parameterization issue. To overcome the nonlinear parameterization, over-parameterization is often employed (see, e.g., \cite{mercker2011rigid,benallegue2018adaptive} and the references cited therein). 

The rigid-body attitude tracking using vector and biased gyro measurements with unknown inertia matrix is studied in this note. The organization and contributions are stated as follows. A globally exponentially stable gyro-bias observer is designed in Section \ref{Sec:Obs}. In Section \ref{Sec:CtrlGBEst},  two attitude tracking controllers, nonadaptive and adaptive,  using vector and biased gyro measurements, are designed using the direct approach. The attitude tracking problem is formulated as an alignment problem between the vector measurements and their desired values, with the alignment errors defined by the inner product and the cross-product between them. The closed-loop system contains, in addition to a set of undesired equilibria, the desired attitude as its equilibrium. By exploiting the relationship between these error variables near the undesired equilibria, a strict Lyapunov function is proposed, which allows achieving almost global asymptotic stability (AGAS) and almost semiglobal exponential stability  (ASGES) of the desired equilibrium in contrast to only AGAS in \cite{pounds2007attitude,mercker2011rigid,tayebi2013inertial,forbes2013passivity}. Moreover, the proposed gyro-bias observer allows to prove a separation property in the combined observer and the nonadaptive controller. When the inertia matrix is unknown, a modified version of the gyro-bias observer avoids over-parametrization in the adaptive controller, in contrast with those adaptive controllers in  \cite{mercker2011rigid,benallegue2018adaptive}. In Section \ref{Sec:Sim}, the proposed adaptive controller is simulated under noisy measurements. Finally, Section \ref{Sec:Conc} draws the conclusions. Appendix A gives the proof of some technical results. 

\section{Preliminaries}\label{Sec:Prel}

\subsection{Notations}
The vector norm is denoted by $\| x\|$ $=$ $(x^T x)^{1/2}$ for $x\in\mathbb{R}^n$, while for a matrix $A\in \mathbb{R}^{n\times n}$, its norm is  $\|  A\|=\lambda_{\max}^{1/2}(A^T A)$. If in addition $A$ is symmetric, then $A=A^{T}>0$ indicates a positive definite matrix with $A^T$ denoting its transpose, and  $\lambda_{\max}(A)$ and $\lambda_{\min}(A)$ are its maximum and minimum eigenvalues, respectively. $I_n$ denotes the identity matrix of $n\times n$, $0_{n\times m}$ a zero matrix of $n\times m$. $S(\cdot) \in \mathbb{R}^{3\times 3}$ is the cross-product operator $S(u)v=u\times v$,  which satisfies $S(u)v=-S(v)u$, $S(u)S(v) - S(v)S(u) = S(S(u)v)$, and $\|S(u)v\|\leq \|u\|\|v\|$, $\forall u,v\in\mathbb{R}^{3}$. 
A unit sphere of dimension $n-1$ embedded in $\mathbb{R}^{n}$ is denoted by $\mathcal{S}^{n-1}= \{ x\in\mathbb{R}^{n} | x^{T}x= 1 \}$. A closed unit ball is expressed as $\mathcal{B}^{n-1}= \{ x\in\mathbb{R}^{n} | x^{T}x\leq 1 \}$. 

\subsection{Rotational Kinematics and Dynamics}
The attitude of a rigid body is represented by a rotation matrix $R\in SO(3)$ that transforms the body-fixed frame into an inertial frame. Commonly, unit quaternions $q=\left[ q_0 , \ q^{T}_{v}\right]^{T}\in \mathcal{S}^{3}$ are employed to represent the attitude, where $q_{0} \in [-1,1]$ and $ q_{v} \in \mathcal{B}^{2}$ are the scalar part and vector part of the quaternion, respectively, and $\mathcal S^3$ denotes the quaternion group with $\hat 1 =\left[ 1,0_{1\times 3}\right]^{T}$ as the identity quaternion.
A rotation matrix $R$ is related to a unit quaternion $q$ through  Rodriguez formula
\begin{equation}\label{eq:Rodrigues}
R(q)=I_3 + 2q_{0}S(q_{v}) + 2S^{2}(q_{v}). 
\end{equation}
%Notice that $R(q)=R(-q)$, therefore $q$ and $-q$ represent the same physical orientation. This ambiguity must be carefully dealt in attitude control designs \cite{wen1991attitude,mayhew2011quaternion}. The quaternion product is defined as $q \otimes p =\left[ q_{0}p_{0} - q^{T}_{v}p_{v}, \ q_{0}p^{T}_{v} + p_{0}q^{T}_{v} - p^{T}_{v}S(q_{v})\right] ^{T}$ $\forall p,\ q\in S^3$. Therefore, the quaternion product of $q$ with its conjugate $q^{-1}=\left[ q_0 , -q^{T}_v \right]^{T}$ is $q\otimes q^{-1}=q^{-1}\otimes q = \hat{1}$, being  $\hat{1}=\left[ 1,0_{1\times 3}\right] ^{T}$ the identity quaternion.

The rotational kinematics and dynamics of a rigid body are
\begin{equation}\label{eq:KinematicsR}
\dot{R} = RS\left( \omega \right),
\end{equation}
\begin{equation}\label{eq:DynamicsB}
M\dot{\omega} = S(M\omega)\omega + \tau ,
\end{equation}
where $\omega\in\mathbb{R}^{3}$ and $\tau\in\mathbb{R}^{3}$ are the angular velocity and the applied torque, respectively. $M=[m_{ij}]_{i,j=1}^{3,3}=M^T>0 $ is the inertia matrix
%, given by 
%\begin{equation} \label{eq:Inertia}
%M=\left[ \begin{array}{ccc}
%m_{11} & m_{12} & m_{13} \\
%m_{12} & m_{22} & m_{23} \\
%m_{13} & m_{23} & m_{33} 
%\end{array} \right] , 
%\end{equation}
with lower and upper bands  $\underline{m}I_{3} \leq M \leq \overline{m}I_{3}$, where $\overline{m}\vcentcolon = \lambda_{\max}(M) $, and $\underline{m}\vcentcolon = \lambda_{\min}(M) $. All are expressed in the body frame.

\subsection{Measurements}
It is assumed that $n\geq 2$ reference vectors in the inertia frame $r_i\in\mathcal{S}^{2}$, $i=1,2,...,n$ are known and their measurements in the body frame $v_{i}\in\mathcal{S}^{2}$, are accessible for all $t\geq 0$. These unit vectors are related through
\begin{eqnarray}\label{eq:vi}
v_i = R^{T} r_{i}, \quad i=1,2,\cdots , n. 
\end{eqnarray}
Furthermore, a biased gyro-rate measurement $\omega_g$ 
\begin{equation}\label{eq:VelM}
 \omega_{g} = \omega + b,
\end{equation}
is available, where $b\in\mathbb{R}^{3}$ is the constant gyro bias.

The following assumptions are made.
\begin{itemize}
\item Assumption A1: Among the $n$ inertial reference vectors $r_i$, there are at least two noncollinear vectors. 
\item Assumption A2: Gyro bias is a constant unknown vector with a known bound, i.e.,  $\|b\|\leq \theta_{b}$, with $\theta_b$ known. 
\end{itemize}

Note that, for each pair of noncollinear vector measurements $v_{1},v_{2}\in\mathcal{S}^{2}$, a third virtual vector measurement $v_{3}=S(v_{1})v_{2}/\|S(v_{1})v_{2}\|$ can be obtained, which satisfies \eqref{eq:vi} with $r_3=S(r_1)r_2/\|S(r_{1})r_{2}\|$. %Therefore, less than $n$ sensors are needed. 

\section{Gyro Bias Observer}\label{Sec:Obs}
The estimated angular velocity $\hat \omega:=\omega_g-\hat b$ can be obtained by the gyro-rate measurement $\omega_g$ and the gyro-bias estimate $\hat b$, which is given by the following  observer
\begin{align}
\dot{\bar{b}}& = K_{f}\hat{\omega}+\gamma_{f}\sum_{i=1}^{n}k_{i}S(\Lambda_{i}v_{i})\big(v_i-v_{f_i}\big) ,  \label{eq:bb} \\
\hat{b} &= \bar{b} - \sum_{i=1}^{n}k_{i}S^{T}(v_{f_i})\Lambda_{i}v_{i} , \label{eq:be}   
\end{align}
where $0<\Lambda_{i}=\Lambda^{T}_{i}\in\mathbb{R}^{3\times 3}$ is a constant matrix gain,  $k_{i}>0$ is  the  weight assigned to each vector measurement according to its confidence level, $K_{f}: = \sum_{i=1}^{n}k_{i} S^{T}(v_{f_i})\Lambda_i S(v_{i})$, and $v_{f_i}$ is the filtered $v_i$ 
\begin{equation}\label{eq:vfp}
\dot{v}_{f_i} = \gamma_f\left( v_{i} - v_{f_i} \right),\; v_{f_i}(0)=v_{i}(0),\; \forall i=1,2,\ldots , n,
\end{equation}
with $\gamma_f>0$ the filter gain.
%\begin{equation}\label{eq:Kf}
%K_{f}: = \sum_{i=1}^{n}k_{i} S^{T}(v_{f_i})\Lambda_i S(v_{i}),
%\end{equation}

The following technical lemma  \cite{espindola2021global} is needed for the design of the bias observer.
\begin{lem}\label{lem1}
\emph{\textbf{(Linear filter):}} Consider the linear filter \eqref{eq:vfp}. Then, $\forall \epsilon_{f} >0$, there exists a $\underline{\gamma}_f: = \omega_{\max} / \epsilon_{f}$, being $\omega_{\max}<\infty$  the bandwidth of the overall system,  such that $\|v_{i} - v_{f_i} \| < \epsilon_{f}$ for $i=1,2,...,n$ and $t\geq 0$ provided that $\gamma_f > \underline\gamma_f$.
\end{lem}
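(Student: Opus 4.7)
The plan is to analyze the filter error directly via the variation of constants formula. Define the filter error $\tilde{v}_i := v_i - v_{f_i}$; since $v_{f_i}(0)=v_i(0)$, we have $\tilde v_i(0)=0$. Differentiating and using \eqref{eq:vfp} gives
\begin{equation*}
\dot{\tilde v}_i = \dot v_i - \gamma_f(v_i - v_{f_i}) = \dot v_i - \gamma_f \tilde v_i,
\end{equation*}
a first-order linear ODE driven by $\dot v_i$ with exponentially stable homogeneous part (rate $\gamma_f$). Its solution is
\begin{equation*}
\tilde v_i(t) = \int_0^t e^{-\gamma_f (t-\tau)} \dot v_i(\tau)\, d\tau.
\end{equation*}

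Next I would bound the driving term $\dot v_i$. From \eqref{eq:vi} and \eqref{eq:KinematicsR},
\begin{equation*}
\dot v_i = \dot R^{T} r_i = -S(\omega)R^{T} r_i = -S(\omega) v_i,
\end{equation*}
so $\|\dot v_i\| \le \|\omega\|\|v_i\| = \|\omega\|$, using that $v_i\in\mathcal{S}^2$ and the cross-product bound listed in the Notations. Interpreting $\omega_{\max}$ as the (finite) bandwidth of the overall closed-loop system, we have $\|\omega(t)\|\le \omega_{\max}$ for all $t\ge 0$, hence $\|\dot v_i(t)\|\le \omega_{\max}$.

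Combining these, I would take norms under the integral:
\begin{equation*}
\|\tilde v_i(t)\| \le \omega_{\max}\int_0^t e^{-\gamma_f(t-\tau)}\, d\tau = \frac{\omega_{\max}}{\gamma_f}\bigl(1-e^{-\gamma_f t}\bigr) \le \frac{\omega_{\max}}{\gamma_f}.
\end{equation*}
Setting $\underline{\gamma}_f := \omega_{\max}/\epsilon_f$, any $\gamma_f > \underline{\gamma}_f$ yields $\|\tilde v_i(t)\| < \epsilon_f$ for all $t\ge 0$ and all $i=1,\ldots,n$, which is the claim.

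The computation is essentially routine; the only subtle point — and the one I would be careful to state clearly rather than "grind through" — is that using $\|\omega\|\le\omega_{\max}$ presupposes that the closed-loop trajectories remain bounded with known bandwidth. In this note this is legitimate because $\omega_{\max}$ is defined as the bandwidth of the overall system (and will be guaranteed a posteriori by the stability results of the subsequent sections), so the filter gain $\gamma_f$ can be selected large enough to enforce the arbitrarily small bound $\epsilon_f$ on the mismatch $v_i - v_{f_i}$ without a circular argument.
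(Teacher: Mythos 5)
Your proof is correct, and in fact the paper offers nothing to compare it against: Lemma \ref{lem1} is imported verbatim from the authors' earlier work \cite{espindola2021global} and is stated here without proof. Your variation-of-constants argument is the standard one for this kind of first-order filter lemma: with $\tilde v_i := v_i - v_{f_i}$ and $\tilde v_i(0)=0$ (guaranteed by the initialization $v_{f_i}(0)=v_i(0)$), the convolution bound $\|\tilde v_i(t)\|\le \frac{\omega_{\max}}{\gamma_f}\bigl(1-e^{-\gamma_f t}\bigr)<\epsilon_f$ for $\gamma_f>\omega_{\max}/\epsilon_f$ is exactly right, and your identity $\dot v_i=-S(\omega)v_i=S(v_i)\omega$ matches the expression the paper itself uses in the proof of Theorem \ref{thm1}. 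The one genuinely delicate point --- that $\|\omega(t)\|\le\omega_{\max}$ presupposes bounded closed-loop trajectories, so the lemma is not free-standing but must be read together with the a posteriori stability guarantees --- is real, and you handle it the same way the paper does implicitly (cf.\ the remark after Theorem \ref{thm1}, which argues that taking $\gamma_f$ sufficiently large removes the need to know $\omega_{\max}$ explicitly). No gaps.
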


\begin{thm}
\emph{\textbf{(Globally exponentially stable gyro-bias observer):}}\label{thm1}
Let $\Lambda_{i}=\Lambda^{T}_{i} >0$, $k_{i} >0$, for $i=1,2,\cdots , n$, and $\gamma_{f} >0$ be chosen according to Lemma \ref{lem1} such that $\lambda_o \vcentcolon = \lambda_{\min}(K_{o}) - \epsilon_{f}\sum_{i=1}^{n}k_{i}\lambda_{\max}(\Lambda_i) > 0 $, 
%\begin{equation}\label{eq:Lo}
%\lambda_o \vcentcolon = \lambda_{\min}(K_{o}) - \epsilon_{v}\sum_{i=1}^{n}k_{i}\lambda_{\max}(\Lambda_i) > 0 ,
%\end{equation}
where $K_o := \sum_{i=1}^{n}k_{i} S^{T}(v_{i})\Lambda_i S(v_{i})$
%\begin{equation}\label{eq:Ko}
%K_o \vcentcolon = \sum_{i=1}^{n}k_{i} S^{T}(v_{i})\Lambda_i S(v_{i}) ,
%\end{equation}
Then, the gyro-bias observer \eqref{eq:bb}-\eqref{eq:be} drives $\hat{b}\to b$ exponentially $\forall\hat{b}(0)\in\mathbb{R}^{3}$. 
\end{thm}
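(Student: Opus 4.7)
The strategy is to show that the observer \eqref{eq:bb}--\eqref{eq:be} is structured precisely so that the bias error $\tilde b := \hat b - b$ obeys the linear time-varying equation $\dot{\tilde b} = -K_f(t)\,\tilde b$, after which a quadratic Lyapunov function combined with Lemma~\ref{lem1} delivers global exponential convergence.

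First I would derive the error dynamics. Since $b$ is constant, $\dot{\tilde b} = \dot{\hat b}$. Differentiating \eqref{eq:be}, substituting \eqref{eq:bb} for $\dot{\bar b}$, and using the filter relation $\dot v_{f_i} = \gamma_f(v_i - v_{f_i})$ together with the body-frame kinematics $\dot v_i = S(v_i)\omega$ (which follows from $v_i = R^{T}r_i$ and \eqref{eq:KinematicsR}), the expression for $\dot{\hat b}$ acquires two $\gamma_f$-contributions: the one explicitly added in \eqref{eq:bb}, and a second one coming from $\tfrac{d}{dt}[S^{T}(v_{f_i})\Lambda_i v_i]$. Applying the identity $S(a)b = -S(b)a$ to the latter shows that the two cancel exactly. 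The remaining terms collapse into $K_f(\hat\omega - \omega) = -K_f\tilde b$, since $\hat\omega - \omega = b - \hat b = -\tilde b$ by \eqref{eq:VelM}. This algebraic reduction is the heart of the argument, and the observer is engineered precisely to produce it.

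For the Lyapunov step I would take $V(\tilde b) = \tfrac12\|\tilde b\|^{2}$, so that $\dot V = -\tilde b^{T} K_f \tilde b$. Since $K_f$ is not symmetric, I would split $K_f = K_o + (K_f - K_o)$. The matrix $K_o$ is symmetric, and under Assumption~A1 it is positive definite, because $\tilde b^{T} K_o \tilde b = \sum_{i} k_i\,(S(v_i)\tilde b)^{T}\Lambda_i\,(S(v_i)\tilde b)$ vanishes only when $\tilde b$ is parallel to every $v_i$, which is ruled out by the presence of two noncollinear measurements. The perturbation $K_f - K_o = \sum_i k_i S^{T}(v_{f_i}-v_i)\,\Lambda_i S(v_i)$ has spectral norm bounded by $\epsilon_f\sum_i k_i \lambda_{\max}(\Lambda_i)$ via Lemma~\ref{lem1} and the identities $\|S(u)\|=\|u\|$ and $\|v_i\|=1$. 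Combining these estimates yields $\dot V \leq -\lambda_o\|\tilde b\|^{2} = -2\lambda_o V$, from which global exponential convergence of $\tilde b$ to zero follows for every $\hat b(0)\in\mathbb{R}^{3}$.

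The main obstacle is the first step: it is not evident \emph{a priori} that the cross-product feed term in \eqref{eq:bb} and the algebraic correction in \eqref{eq:be} conspire to eliminate both the filter-error and the $\omega$-dependent perturbations in $\dot{\hat b}$, and sign-tracking through the $S(\cdot)$-identities must be performed carefully. Once the linear equation $\dot{\tilde b} = -K_f\tilde b$ is established, the rest of the proof is standard.
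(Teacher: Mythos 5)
Your proposal is correct and follows essentially the same route as the paper: it establishes the same error dynamics $\dot{\tilde b}=-K_f\tilde b$ via the exact cancellation of the two $\gamma_f$-terms, uses the same Lyapunov function $V=\tfrac12\|\tilde b\|^2$ with the same splitting $K_f=K_o+(K_f-K_o)$, and obtains the identical bound $\dot V\leq -2\lambda_o V$. The only (welcome) additions are that you prove the positive definiteness of $K_o$ under Assumption A1 directly, where the paper merely cites Lemma~2 of the Tayebi et al.\ reference, and that you spell out the sign-tracking in the cancellation that the paper's one-line computation leaves implicit.
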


\begin{proof}
It follows from \eqref{eq:vi} and \eqref{eq:KinematicsR} that  $\dot{v}_{i} = S(v_{i})\omega$.
%\begin{align}\label{eq:vip}
%\dot{r}_{i} &= 0_{3\times 1} ,\notag\\
%\dot{v}_{i} &= S(v_{i})\omega ,\quad \forall i=1,2,\cdots , n.
%\end{align}
Let the bias-estimation error be defined as $\tilde b:=\hat b-b$.  Its time derivative, by using \eqref{eq:bb}, \eqref{eq:be},  \eqref{eq:vfp} and $K_f$ is 
\begin{equation}\label{eq:bEp}
    \dot{\tilde{b}} = \dot{\bar{b}} - \frac{d}{dt}\sum_{i=1}^{n}k_{i}S^{T}(v_{f_i})\Lambda_{i}v_{i} = K_{f}\left( \hat{\omega} - \omega \right) = -K_{f}\tilde b,  \notag
\end{equation}
which can be expressed by adding $\pm K_o$ as
\begin{equation}\label{eq:bTilDp}
    \dot{\tilde{b}} = -K_{f}\tilde{b}= - K_{o}\tilde{b} - \left( \sum_{i=1}^{n}k_{i} S^{T}(v_{f_i} - v_{i})\Lambda_i S(v_{i}) \right)\tilde{b}.
\end{equation}
Note that $K_o=K_o^T>0$ under Assumption A1 (Lemma 2, \cite{tayebi2013inertial}).    

Consider the Lyapunov function candidate $ V_1 = \frac{1}{2}\|\tilde{b}\|^{2}$. 
Its time evolution along \eqref{eq:bTilDp} is
\begin{align}\label{eq:V1p}
    \dot{V}_1 
    &= - \tilde{b}^{T} K_{o}\tilde{b} - \tilde{b}^{T} \left( \sum_{i=1}^{n}k_{i} S^{T}(v_{f_i} - v_{i})\Lambda_i S(v_{i}) \right)\tilde{b} \notag \\
    &\leq - \lambda_{\min}(K_{o})\|\tilde{b}\|^{2} + \| \sum_{i=1}^{n}k_{i} S^{T}(v_{f_i} - v_{i})\Lambda_i S(v_{i}) \| \|\tilde{b}\|^{2} \notag \\
    &\leq - \lambda_{\min}(K_{o})\|\tilde{b}\|^{2} + \sum_{i=1}^{n}k_{i} \| v_{f_i} - v_{i}\| \|\Lambda_i \| \|\tilde{b}\|^{2} \notag \\
    &\leq - \Big( \lambda_{\min}(K_{o}) + \epsilon_{f}  \sum_{i=1}^{n}k_{i} \lambda_{\max}(\Lambda_i )\Big) \|\tilde{b}\|^{2} \notag \\
    &= -\lambda_{o} \|\tilde{b}\|^{2} = -2\lambda_{o}V_{1}.
\end{align}
Therefore, $\tilde{b}\to 0_{3\times 1}$ exponentially for any initial condition $\tilde{b}(0)\in\mathbb{R}^{3}$ with decaying rate $\lambda_o$.
\end{proof}

\begin{rem}[Gyro-bias observer] 
%The observer \eqref{eq:bb}-\eqref{eq:be} achieves the global exponential convergence to its true value under Assumption A1, A2 and the condition \eqref{eq:Lo}. 
The gain condition on $\lambda_o$ stated in Theorem \ref{thm1} is different from the high-gain conditions in observer designs (Sec. 14.5, \cite{khalil2002nonlinear}), since $\epsilon_f$ can be arbitrarily small by a sufficiently large filter gain $\gamma_f$ according to Lemma \ref{lem1}. Note that a sufficiently large $\gamma_f$ also implies that the knowledge of the bandwidth of the overall system $\omega_{\max_i}$ is not needed. 
Note that the vector measurement noise may be amplified by the matrix gain $\Lambda_{i}$ due to the second right-side term in \eqref{eq:be} which depends on the vector measurements. Nonetheless, the weights  $k_{i}$ can be assigned to reduce this effect. 
%The weight $k_{i}$ can be assigned to $k_{i}\in(0,1]$ according to the confidence in each sensor, where $1$ represents a very reliable sensor with low noise, and a small $k_{i}$ may be assigned to the sensor with high noise. 
\end{rem}

\section{Attitude Control Design}\label{Sec:CtrlGBEst}
In this section, an attitude tracking controller is first designed based on the bias observer \eqref{eq:bb}-\eqref{eq:be} and assuming known the inertia matrix. The exponential stability, ASGES and AGAS of the desired equilibrium, as well as a separation property of the combined observer and the controller are proved.   Then an adaptive control is devised when the inertia matrix is unknown, and AGAS of the adaptive control is ensured. Relying on a modified bias estimator, only six parameters of the inertia matrix are estimated, and thus no over-parameterization is needed. In both controllers, the vector measurements $v_i$ in \eqref{eq:vi} and the gyro-rate measurement $\omega_g$ with bias estimation $\hat b$   are employed directly in the control law without  attitude estimation.  
%The gyro bias is estimated by the observer \eqref{eq:be}-\eqref{eq:vfp} without modification, i.e., the observer \eqref{eq:be}-\eqref{eq:vfp} and the controller in this section are designed separately, proving a separation property.

\subsection{Control objective and  error variables}
Let $R_{d}\in SO(3)$ be a desired attitude related to a differentiable desired angular velocity $\omega_{d}\in\mathbb{R}^{3}$ through $\dot{R}_{d} = R_{d}S(\omega_{d})$. Define the attitude error as $ R_e = RR^{T}_{d}$, which is associated with its quaternion parameterization $e=\left[e_{0},e^{T}_{v}\right]^{T}\in\mathcal{S}^{3}$ by Rodriguez formula $R_e=R(e)$ in \eqref{eq:Rodrigues}.
 The control objective is to design a control law to achieve asymptotic convergence of the attitude tracking error $e\rightarrow \pm\hat 1=[\pm 1, 0_{1\times 3}]^T$ and the angular velocity error $\omega- \omega_d\rightarrow 0_{3\times 1}$.
 
To relate the quaternion error $e=[e_0, e_v^T]^T$ with the vector measurements, two error variables are defined: the weighted inner product between $v_i$ in \eqref{eq:vi} and its desired value $v_{d_i}:=R^T_d r_i$  
\begin{equation}\label{eq:CfgErr}
    e_{R} = \frac{1}{2}\sum^{n}_{i=1}k_{i}\| v_{i} - v_{d_i}\|^{2} ,
\end{equation} 
where $k_i >0$ are given in the observer \eqref{eq:bb}-\eqref{eq:be}, 
%\begin{align}
%v_{d_i} &= R^{T}_{d}r_{i} ,\label{eq:vd}\\
%\dot{v}_{d_i} &= S(v_{d_i})\omega_{d} ,\quad \forall i=1,2,\cdots , n.  \label{eq:vdp}
%\end{align}
and the weighted cross product between $v_i$ and $v_{d_i}$ 
\begin{equation}\label{eq:z}
z = \sum^{n}_{i=1} k_i S(v_{i} )v_{d_i}. 
\end{equation}

Some useful results for the subsequent analysis are given in the following lemma. Its proof is given in Appendix \ref{App1}.
\begin{lem}\label{lem2} \emph{\textbf{(Properties of error variables $e_R$ and $z$):}} 
\begin{enumerate}
\item[1)] The error $e_R$ in \eqref{eq:CfgErr} can be expressed as
\begin{equation}\label{eq:CfgErr1}
    e_R =  \sum^{n}_{i=1}k_{i}\left( 1-v^{T}_{i}v_{d_i} \right) = 2 e^{T}_{v}We_{v},
\end{equation}
where  $W  \vcentcolon = -\sum^{n}_{i=1} k_i S^{2}(r_{i} )=W^T>0$ under Assumption A1.  
%\begin{equation}\label{eq:W}
%    W \vcentcolon = -\sum^{n}_{i=1} k_i S^{2}(r_{i} ).
%\end{equation}
Furthermore,  $0 \leq e_R \leq 2\sum^{n}_{i=1} k_i$. $e_R = 0$ implies $v_{i} = v_{d_i}$ and $e =\pm\hat 1$; while $e_{R} = 2\sum^{n}_{i=1} k_i$ implies $v_{i} = -v_{d_i}$. Besides, $e_{R}=2\lambda_{w,1}$ provided that $e_{v}=v_{w,1}$, where $v_{w,j}$, $j=1,2,3$, are the unit eigenvectors of $W$ associated with the eigenvalues $\lambda_{w,j}$, i.e., $W v_{w,j} = \lambda_{j}v_{w,j}$, ordered as $\lambda_{w,1}\geq\lambda_{w,2}\geq\lambda_{w,3}>0$. %, and $\lambda_{w,1}=\sum^{n}_{i=1} k_i$.

\item[2)] The time derivative of $e_R$ is
\begin{equation}\label{eq:cfgP}
\dot{e}_{R} = z^{T} (\omega - \omega_d).
\end{equation}

\item[3)] The error variable $z$ in \eqref{eq:z} can be expressed as
\begin{equation}\label{eq:ze}
z = 2R^{T}_{d} \Big( e_{0}I_{3} - S(e_{v})\Big) W e_{v}.
\end{equation}
Furthermore,  $z=0_{3\times 1}$ implies that $e=\pm\hat 1$, or $e=[0, v^T_{w,j}]^T$, $j=1,2,3$. 

\item[4)] The dynamics of $z$ is described by
\begin{equation}\label{eq:zP}
\dot{z} =  J ( \omega - \omega_d ) + S(z)\omega_d ,
\end{equation}
where $J = \sum^{n}_{i=1}k_{i}S^{T}(v_{d_i})S(v_{i})$, with  $\|J\|\leq \sum^{n}_{i=1} k_i $.
%\begin{equation}\label{eq:Jbar}
%J = \sum^{n}_{i=1}k_{i}S^{T}(v_{d_i})S(v_{i}),
%\end{equation}
%and $\|J\|\leq \lambda_{w,1}$.

\item[5)] For any  $\alpha_1>0$,  there exists $\beta>0$,  such that
\begin{equation}\label{eq:bcond}
      \alpha_{1}e_{R}\leq \frac{\beta}{2}\| z \|^{2}, \ \ \forall t\geq 0 ,
\end{equation}
for all $e \in  \mathcal{S}^{3} \backslash \bigcup B_j$, where $\bigcup B_j \vcentcolon = B_{1}\cup B_{2}\cup B_{3}$, and
%$\mathcalB_j \vcentcolon = \left\{ e\in\mathcal{S}^{3} \;|\; e=\left[ -\sigma , \sqrt{1-\sigma^{2}}v^{T}_{w,j} \right]^{T} , \; \forall \sigma\in [0,\epsilon]\right\}$
{\small
\begin{equation}\label{eq:BallQ}
B_j \vcentcolon = \left\{ e\in\mathcal{S}^{3} \;|\; e=\left[ -\rho , \sqrt{1-\rho^{2}}v^{T}_{w,j} \right]^{T}, \  \rho\in [0,\epsilon]\right\}, \notag
\end{equation}
}
is a closed ball with center in $e_{j}\vcentcolon =\left[ 0,v^{T}_{w,j}\right]^{T}$, and  (arbitrarily small) radius $\epsilon >0$, for $j=1,2,3$.
\end{enumerate}
\end{lem}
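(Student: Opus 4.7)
The plan is to dispatch Parts 1--4 by direct algebra and calculus, and then to address Part 5, which is the main obstacle, via a continuity/compactness argument.

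For Part 1, I would use $\|v_i-v_{d_i}\|^2=2-2v_i^T v_{d_i}$ (both unit) to rewrite $e_R=\sum k_i(1-v_i^T v_{d_i})$, then substitute $v_i^T v_{d_i}=r_i^T R(e)r_i$ and apply Rodrigues' formula \eqref{eq:Rodrigues}. The linear term in $e_0$ vanishes by skew-symmetry, leaving $1-v_i^T v_{d_i}=-2r_i^T S^2(e_v)r_i$; the identity $S(u)v=-S(v)u$ then converts this into $-2e_v^T S^2(r_i)e_v$, and summing gives $e_R=2e_v^T W e_v$. Positive definiteness of $W$ follows from $-S^2(r_i)=I_3-r_i r_i^T$, whose null space is $\mathrm{span}(r_i)$; Assumption A1 supplies at least two noncollinear $r_i$, so the null spaces intersect trivially. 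The bounds $0\le e_R\le 2\sum k_i$ and the extremal characterizations follow from $|v_i^T v_{d_i}|\le 1$, and the eigenvector identity $e_v=v_{w,1}\Rightarrow e_R=2\lambda_{w,1}$ is a direct substitution.

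For Part 2, I would differentiate $e_R$, use $\dot v_i=S(v_i)\omega$ and $\dot v_{d_i}=S(v_{d_i})\omega_d$, and drop the terms $v_i^T S(v_i)\omega$ and $v_{d_i}^T S(v_{d_i})\omega_d$ by skew-symmetry; the scalar triple product then reassembles the remaining cross terms into $\sum k_i(\omega-\omega_d)^T S(v_i)v_{d_i}=z^T(\omega-\omega_d)$. For Part 3, rather than expand $z$ in $e$ directly, I would reuse Parts 1 and 2: from $e_R=2e_v^T W e_v$ one has $\dot e_R=4e_v^T W\dot e_v$, while the error kinematics $\dot R_e=R_e S(R_d(\omega-\omega_d))$ yield $\dot e_v=\tfrac{1}{2}(e_0 I_3+S(e_v))R_d(\omega-\omega_d)$; matching coefficients of $(\omega-\omega_d)$ with Part 2 forces $z=2R_d^T(e_0 I_3-S(e_v))W e_v$. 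To characterize $z=0$, I would split on $e_v=0$ (which gives $e=\pm\hat 1$) versus $e_v\ne 0$; in the latter case, taking the inner product of $(e_0 I_3-S(e_v))W e_v=0$ with $e_v$ kills the skew-symmetric term and, using $W>0$, forces $e_0=0$. The residual equation $S(e_v)W e_v=0$ then forces $W e_v$ parallel to $e_v$, so $e_v=\pm v_{w,j}$.

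For Part 4, I would apply the product rule to $z=\sum k_i S(v_i)v_{d_i}$, substitute the kinematic identities, and collect terms with the vector triple product and $S(u)S(v)=vu^T-(u^T v)I_3$. The $\omega_d$-contributions repack as $S(z)\omega_d$, while the $(\omega-\omega_d)$-contributions collapse to $-\sum k_i S(v_{d_i})S(v_i)=\sum k_i S^T(v_{d_i})S(v_i)=J$. The bound $\|J\|\le\sum k_i$ follows from $\|S(u)\|=\|u\|$ on the unit sphere.

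Part 5 is where the real work lies. From the formula in Part 3, $\|z\|^2=4[e_0^2\|W e_v\|^2+\|e_v\times W e_v\|^2]$, whose zero set on $\mathcal{S}^3$ is precisely $\{\pm\hat 1\}\cup\{\pm e_j\}_{j=1}^{3}$. Near the desired equilibria $\pm\hat 1$, a local Taylor expansion gives $e_R=O(\|e_v\|^2)$ while $\|z\|^2\ge 4\lambda_{w,3}^2 e_0^2\|e_v\|^2$, so $e_R/\|z\|^2$ remains bounded and no exclusion is needed there. Near the undesired equilibria $\pm e_j$, by contrast, $e_R\to 2\lambda_{w,j}>0$ while $\|z\|^2\to 0$, so the ratio must diverge, which is the role of $B_j$. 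The core step is a compactness argument: on the compact set $\mathcal{S}^3\setminus\bigcup B_j$, the continuous function $e_R/\|z\|^2$ is bounded above by some finite $M(\epsilon)$, and taking $\beta:=2\alpha_1 M(\epsilon)$ yields the claim. The delicate point, and the step I expect to require the most care, is to verify that the exclusion $B_j$ prescribed in the statement really is large enough to keep $\|z\|^2/e_R$ bounded below on its complement from \emph{every} direction of approach to $\pm e_j$; this reduces to a local expansion of both quantities in the $W$-eigenbasis to check which degenerate modes are absorbed by $B_j$ and which are already controlled by the cross term $\|e_v\times W e_v\|^2$.
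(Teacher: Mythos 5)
Parts 1), 2) and 4) of your plan are correct and coincide with what the paper actually does in Appendix A: the Rodrigues substitution with the $e_0$-linear term annihilated by skew-symmetry for \eqref{eq:CfgErr1}, the triple-product rearrangement for \eqref{eq:cfgP}, and the product rule together with $S(u)S(v)-S(v)S(u)=S(S(u)v)$ for \eqref{eq:zP}. Part 3) is where you genuinely depart from the paper: the paper disposes of \eqref{eq:ze} and the zero-set characterization by citing Lemmas 1 and 3 of \cite{tayebi2013inertial}, whereas you rederive the formula by equating the two expressions for $\dot e_R$ as linear functionals of $\omega-\omega_d$ (valid, provided you state the small but necessary observation that $e_R$ and $z$ depend on the configuration alone and that $\omega-\omega_d$ is arbitrary at a fixed configuration), and your zero-set argument (pair with $e_v$ to kill the skew term and force $e_0=0$ via $W>0$, then read $S(e_v)We_v=0$ as an eigenvector condition) is clean and self-contained. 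That buys a proof that does not outsource the one structural fact the lemma rests on; the paper's citation buys brevity.

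Part 5) is where there is a genuine gap, and it sits precisely at the step you flagged but deferred. Two problems. The minor one: $\mathcal{S}^{3}\setminus\bigcup B_j$ is \emph{not} compact (the $B_j$ are closed, so the complement is open), and $e_R/\|z\|^{2}$ is a $0/0$ indeterminacy at $\pm\hat 1$; both are repairable with your own local bound $e_R/\|z\|^{2}\leq \lambda_{w,1}/(2\lambda_{w,3}^{2}e_0^{2})$ near $\pm\hat 1$, excising small neighborhoods of $\pm\hat 1$ and taking the closure of what remains. The major one: the verification you postpone — that $B_j$ absorbs \emph{every} divergent direction of approach to the undesired equilibria — is the entire content of the claim, and for the $B_j$ literally displayed in the statement it \emph{fails}. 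The displayed $B_j$ is a one-parameter arc $\{[-\rho,\sqrt{1-\rho^{2}}\,v_{w,j}^{T}]^{T},\ \rho\in[0,\epsilon]\}$, not a three-dimensional ball. Take the tilt approach $e=[0,\ v^{T}]^{T}$ with $v=\cos\theta\, v_{w,j}+\sin\theta\, v_{w,k}$, $k\neq j$ and $\lambda_{w,j}\neq\lambda_{w,k}$: for small $\theta\neq 0$ this $e$ lies outside every arc $B_l$, yet
\begin{equation*}
e_R=2\bigl(\lambda_{w,j}\cos^{2}\theta+\lambda_{w,k}\sin^{2}\theta\bigr)\to 2\lambda_{w,j}>0,
\qquad
\|z\|^{2}=4\cos^{2}\theta\sin^{2}\theta\,(\lambda_{w,j}-\lambda_{w,k})^{2}\to 0 ,
\end{equation*}
so $e_R/\|z\|^{2}$ is unbounded on the complement of the arcs and no finite $\beta$ exists there. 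Your compactness program therefore closes only under the reading suggested by the paper's prose ("closed ball with center $e_j$ and radius $\epsilon$"), i.e., $B_j$ a genuine neighborhood of $e_j$ in $\mathcal{S}^{3}$ — and one must also excise the antipodes $-e_j$, which belong to the zero set of $z$ since $e_R$ and $z$ are even in $e$. With true balls, your argument does go through and is in fact \emph{more} rigorous than the paper's: the paper's own proof evaluates $e_R$ and $\|z\|^{2}$ only along the arc perturbation \eqref{eq:SmllR} and asserts without justification that "this is the worst case," which the tilt mode above contradicts whenever $|\lambda_{w,j}-\lambda_{w,k}|<\lambda_{w,j}/\sqrt{2}$. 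So carry out the eigenbasis expansion you proposed: it is not merely the most delicate step, it is the step that corrects both your set-theoretic premise and the paper's unproven worst-case claim.
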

%\begin{proof}
%See appendix \ref{App1}.
%\end{proof}

\subsection{Attitude tracking controller }
Define the reference velocity $\omega_r: =  -\lambda_{c} z + \omega_d $ and the composite tracking error $\sigma := {\omega} - \omega_r$, where $\lambda_{c} >0$ is a design parameter. Note that the control objective is achieved if $z(t)$ and $\sigma(t) \rightarrow 0_{3\times 1}$ asymptotically. The following control law is proposed
\begin{equation}\label{eq:CtrlOF}
 \tau = M \dot{\hat{\omega}}_r - S( M\hat{\omega} ) \omega_r - K_c {{{{{\hat\sigma}}}}} - \left( \alpha_{1} I_{3} + \alpha_{2}J^{T}  \right) z ,
\end{equation}
where $\dot{\hat{\omega}}_r = -\lambda_{c}J\left( \hat{\omega} - \omega_{d} \right) -\lambda_{c}S(z)\omega_{d} +\dot{\omega}_{d} $, ${{{{{\hat\sigma}}}}} = \hat{\omega} - \omega_r$, $\hat\omega=\omega_g-\hat b$, with  $\hat b$ given by the gyro-bias observer \eqref{eq:be}. 
 $K_c \in \mathbb{R}^{3\times 3}$, $K_c = K^{T}_c >0$ and $\alpha_{1}, \alpha_{2} >0$ are design parameters.

\begin{thm}
\emph{\textbf{(Exponential stability):}}\label{thm2}
Let the design parameters be chosen as follows:
\begin{itemize}
    \item Gyro-bias observer: $\Lambda_{i}=\Lambda^{T}_{i} >0$, $k_{i} >0$, for $i=1,2,\cdots , n$,  and $\gamma_f >0$ according to Theorem \ref{thm1},
     \item Controller:  $K_c = K^{T}_c >0$, $\lambda_{c}>0$ , $\alpha_{1}>0$, $\alpha_{2} >0$,
\end{itemize}
such that
\begin{equation}\label{eq:La}
\lambda_a \vcentcolon = \alpha_{1} -\alpha_{2}\sum^{n}_{i=1}k_{i} > 0 ,
\end{equation}
\begin{equation}\label{eq:CondSty}
\lambda_o \vcentcolon = \lambda_{\min}(K_{o}) - \epsilon_{f}\sum_{i=1}^{n}k_{i}\lambda_{\max}(\Lambda_i)> \frac{\|G\|^{2}}{4\lambda_{\min}(K_{c})},
\end{equation}
where $G\vcentcolon = K_{c} - S(\omega_{r})M + \lambda_{c}MJ$, which is bounded in light of the boundedness of $z$ and $\omega_d$.
%\begin{equation}\label{eq:G}
%G\vcentcolon = K_{c} - S(\omega_{r})M + \lambda_{c}MJ.
%\end{equation}
Then the attitude controller \eqref{eq:CtrlOF} and the gyro-bias observer \eqref{eq:bb}-\eqref{eq:be}, in closed loop with the system \eqref{eq:KinematicsR}-\eqref{eq:DynamicsB}, have the following properties: for $j=1,2,3$ 
\renewcommand{\labelenumi}{(\roman{enumi})}
\begin{enumerate}
    \item The set of equilibria for $ \big( e,\sigma ,\tilde{b} \big)  \in \mathcal{S}^{3}\times\mathbb{R}^{3}\times\mathbb{R}^{3}$ is $\big\lbrace \big( \pm \hat{1},0_{3\times 1} , 0_{3\times 1} \big) , \big( e_{j} , 0_{3\times 1}, 0_{3\times 1} \big)  \big\rbrace$.
    
    \item The equilibrium $\big( \pm \hat{1},0_{3\times 1}, 0_{3\times 1} \big)$ is exponentially stable  % which in turn implies $R\to R_{d}$, $\omega\to \omega_{d}$, and $\hat{b}\to b$ almost globally exponentially, 
$\forall \big( e(0),\sigma(0) ,\tilde{b}(0) \big)\in \mathcal{X}:= \mathcal{S}^{3} \backslash B_e \times\mathbb{R}^{3}\times\mathbb{R}^{3}$, where $B_e:=\{ e \in \mathcal S^3| \ e_R<2\lambda_{\omega,3}\}$.
    
    \item The three undesired equilibria $\big( e_{j} , 0_{3\times 1}, 0_{3\times 1} \big)$ are unstable.% set $B_u \times\mathbb{R}^{3}\times\mathbb{R}^{3}$ is unstable, where $B_u:=\{ e_{j}=[0, v^T_{w,j}]^T | \ j=1,2,3\}$.
\end{enumerate}
\end{thm}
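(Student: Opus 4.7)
The plan is to start by converting the control law into clean closed-loop error dynamics for the triple $(e,\sigma,\tilde b)$. Substituting \eqref{eq:CtrlOF} into \eqref{eq:DynamicsB}, using $\hat\omega = \omega - \tilde b$ and $\hat\sigma = \sigma - \tilde b$, and comparing the given $\dot{\hat\omega}_r$ with the true $\dot\omega_r = -\lambda_c J(\omega-\omega_d) - \lambda_c S(z)\omega_d + \dot\omega_d$ (from \eqref{eq:zP}) to obtain the identity $\dot{\hat\omega}_r = \dot\omega_r + \lambda_c J\tilde b$, and finally applying the bilinear decomposition $S(M\omega)\omega - S(M\hat\omega)\omega_r = S(M\omega)\sigma - S(\omega_r)M\tilde b$, I would arrive at the compact form
\begin{equation}
M\dot\sigma = -K_c\sigma + S(M\omega)\sigma - (\alpha_1 I_3 + \alpha_2 J^T)z + G\tilde b,
\end{equation}
with $G$ precisely as defined in the statement. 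Combined with $\dot e_R = z^T\sigma - \lambda_c\|z\|^2$ from Lemma \ref{lem2}(2) (after substituting $\omega - \omega_d = \sigma - \lambda_c z$) and the observer equation $\dot{\tilde b} = -K_f\tilde b$ from the proof of Theorem \ref{thm1}, this gives the complete closed-loop error system.

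For item (i), at any equilibrium GES of the observer forces $\tilde b = 0$; setting $\sigma=0$ and $\dot\sigma=0$ in the $\sigma$-equation leaves $(\alpha_1 I_3 + \alpha_2 J^T)z = 0$. Since $\|J\|\le \sum_i k_i$, the hypothesis $\lambda_a>0$ gives $\|\alpha_2 J^T\| < \alpha_1$, so $\alpha_1 I_3+\alpha_2 J^T$ is nonsingular and $z = 0$. Lemma \ref{lem2}(3) then restricts $e$ to $\{\pm\hat 1, e_1, e_2, e_3\}$, and a direct check confirms each is indeed an equilibrium.

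For item (ii), I would use the composite Lyapunov function
$V = \alpha_1 e_R + \tfrac{1}{2}\sigma^T M\sigma + \tfrac{\kappa}{2}\|\tilde b\|^2$ with $\kappa>0$ chosen below. The key cancellation is that the $\alpha_1 z^T\sigma$ term coming from $\dot e_R$ is exactly killed by the $-\alpha_1 \sigma^T z$ contribution from $-\sigma^T(\alpha_1 I_3+\alpha_2 J^T)z$, leaving
\begin{equation}
\dot V \le -\alpha_1\lambda_c\|z\|^2 - \sigma^T K_c\sigma - \alpha_2 z^T J\sigma + \sigma^T G\tilde b - \kappa\lambda_o\|\tilde b\|^2.
\end{equation}
Two applications of Young's inequality then dispatch the cross terms: the $\alpha_2 z^T J\sigma$ piece is absorbed into a fraction of $\alpha_1\lambda_c\|z\|^2$ and a fraction of $\sigma^T K_c\sigma$ (using $\|J\|\le\sum_i k_i$), while $\sigma^T G\tilde b$ is split between the remaining portion of $\sigma^T K_c\sigma$ and the $\kappa\lambda_o\|\tilde b\|^2$ term. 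Condition \eqref{eq:CondSty} is exactly what makes the $\tilde b$ coefficient positive after this split. The outcome is $\dot V \le -c_1\|z\|^2 - c_2\|\sigma\|^2 - c_3\|\tilde b\|^2$ with positive $c_i$. Lemma \ref{lem2}(5), valid on $\mathcal S^3\setminus\bigcup B_j$, then bounds $\alpha_1 e_R\le (\beta/2)\|z\|^2$, so $V$ is quadratically equivalent to $\|z\|^2+\|\sigma\|^2+\|\tilde b\|^2$ on the stated basin, yielding $\dot V\le -\gamma V$ and hence exponential convergence of the full state to the desired equilibrium.

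For item (iii), I would apply Chetaev's instability theorem at each $(e_j,0_{3\times 1},0_{3\times 1})$. A natural candidate is $V_c = \alpha_1(e_R - 2\lambda_{w,j}) - \tfrac{1}{2}\sigma^T M\sigma - \tfrac{\kappa}{2}\|\tilde b\|^2$, which vanishes at the saddle and is strictly positive along perturbations into the eigen-directions $v_{w,k}$ with $\lambda_{w,k}>\lambda_{w,j}$ (for $e_3$, where no such direction exists, one instead perturbs $\sigma$ off zero and uses $\dot e_R = z^T\sigma - \lambda_c\|z\|^2$ together with $z=0$ at the equilibrium to show the trajectory leaves any small ball). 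I expect the \emph{main obstacle} to be the bookkeeping in paragraph three: balancing the three Young splits so that all three quadratic coefficients stay strictly positive simultaneously under just \eqref{eq:La} and \eqref{eq:CondSty}, and then verifying the chosen $V_c$ satisfies Chetaev's hypotheses uniformly in $(\sigma,\tilde b)$ once the full closed-loop dynamics (with the bias-observer coupling $G\tilde b$) are accounted for near each saddle.
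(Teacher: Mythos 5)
Your derivation of the closed-loop dynamics is exactly the paper's: the identity $\dot{\hat\omega}_r=\dot\omega_r+\lambda_c J\tilde b$, the decomposition producing $G\tilde b$ with $G=K_c-S(\omega_r)M+\lambda_c MJ$, and the resulting $\sigma$-equation all match, as does your item (i) argument (the paper is terser but implicitly uses the same nonsingularity of $\alpha_1 I_3+\alpha_2 J^T$ under \eqref{eq:La}, then Lemma \ref{lem2}(3)). For item (ii), however, your Lyapunov function omits the $\tfrac{\alpha_2}{2}\|z\|^2$ term that the paper's $V_2$ in \eqref{eq:V2} carries, and this is not cosmetic. With that term included, $\alpha_2 z^T\dot z$ contributes $+\alpha_2 z^TJ\sigma$, which cancels \emph{identically} against the $-\alpha_2\sigma^TJ^Tz$ from the $\sigma$-equation — exactly as the $\alpha_1 z^T\sigma$ terms cancel — leaving $\dot V_2=-\sigma^TK_c\sigma+\sigma^TG\tilde b-\tilde b^TK_f\tilde b-\lambda_c z^T(\alpha_1I_3+\alpha_2J)z$, whose negativity is precisely \eqref{eq:La} (for the $z$-block, via $\|J\|\le\sum_i k_i$) plus the Schur condition \eqref{eq:CondSty} (for the $(\sigma,\tilde b)$-block). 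Your route instead leaves $-\alpha_2 z^TJ\sigma$ to be dispatched by Young's inequality, which necessarily consumes a fraction of $\lambda_{\min}(K_c)$ bounded below by $\alpha_2^2(\sum_i k_i)^2/(4\alpha_1\lambda_c)$; after that, \eqref{eq:La} and \eqref{eq:CondSty} as stated no longer close the argument without an additional condition coupling $K_c,\lambda_c,\alpha_1,\alpha_2$ (or a free weight $\kappa$ chosen large, which proves the result under hypotheses different from the theorem's). So the ``bookkeeping obstacle'' you anticipate is real but avoidable: it is created by dropping the $z$-quadratic from $V$. (Minor: your stated basin $\mathcal S^3\setminus\bigcup B_j$ is the region of the ASGES corollary, not item (ii)'s $\mathcal X$.)

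Item (iii) is a genuine gap: your Chetaev cone points the wrong way. The closed loop is $V$-decreasing, so trajectories escape the saddles through states of \emph{smaller} $V$ (hence smaller $e_R$), not larger. Concretely, at any point of your cone $\{V_c>0\}$ with $\sigma=0$, $\tilde b=0$ and $z\neq 0$, your candidate gives $\dot V_c=\alpha_1\dot e_R=-\alpha_1\lambda_c\|z\|^2<0$ by \eqref{eq:cfgP}, violating Chetaev's requirement of positive derivative throughout the cone. You also misidentify the degenerate case: with the ordering $\lambda_{w,1}\geq\lambda_{w,2}\geq\lambda_{w,3}$, it is $e_1$ (the top eigenvector) that admits no direction of larger eigenvalue — indeed $e_R\le 2\lambda_{w,1}$ globally — so your construction fails precisely at $e_1$, not $e_3$. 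The paper's argument goes the other way: it exhibits the explicit nearby point $e_j^*=e_j\otimes\bigl[\sqrt{1-\epsilon^2},\ \epsilon v_{w,j}^T\bigr]^T=\bigl[-\epsilon,\ \sqrt{1-\epsilon^2}\,v_{w,j}^T\bigr]^T$ (a small rotation about $v_{w,j}$ itself), for which $e_R^*=2(1-\epsilon^2)\lambda_{w,j}$ and $\|z^*\|^2=4\epsilon^2\lambda_{w,j}^2(1-\epsilon^2)$, so that $V_2(e_j^*,\sigma^*,\tilde b^*)-V_2(e_j,0_{3\times1},0_{3\times1})=\tfrac12\sigma^{*T}M\sigma^*+\tfrac12\|\tilde b^*\|^2-2\lambda_{w,j}\bigl(\alpha_1-\alpha_2\lambda_{w,j}(1-\epsilon^2)\bigr)\epsilon^2<0$ for $\sigma^*,\tilde b^*$ sufficiently small, the bracket being positive by \eqref{eq:La} since $\lambda_{w,j}\le\sum_i k_i$; since $V_2$ is nonincreasing along trajectories, no trajectory from $(e_j^*,\sigma^*,\tilde b^*)$ can converge to the saddle, proving instability for all three $e_j$ uniformly. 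You should replace your $V_c$ with this construction (equivalently, use $V_2(e_j,0_{3\times1},0_{3\times1})-V_2$ as the Chetaev function); note that with your $V$ (no $\alpha_2\|z\|^2/2$ term) the same perturbation computation is in fact even simpler, since the positive $2\alpha_2\epsilon^2\lambda_{w,j}^2(1-\epsilon^2)$ contribution disappears.
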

\begin{proof}
The proof starts with calculating the dynamics of the error state $x_1:=[z^T\ \sigma^T \ \tilde b^T]^T$ in closed loop with the control law \eqref{eq:CtrlOF}. By \eqref{eq:zP}, the dynamics of $z$ is rewritten as
\begin{equation}\label{eq:zP2}
\dot{z} = J \sigma -\lambda_c J z + S(z)\omega_d.
\end{equation}
The dynamics of $\sigma$ can be calculated by the definition of $\omega_r$ and its time derivative $\dot \omega_r=-\lambda_c J(\omega-\omega_d)-\lambda_c S(z)\omega_d+\dot \omega_d$, the definition of $\dot{\hat\omega}_r$, the fact $\hat\omega-\omega=-\tilde b$, the property of the skew-symmetric matrix, and the control law \eqref{eq:CtrlOF} as follows
\begin{align}\label{eq:OmTilP}
& M\dot\sigma = M(\dot{\omega} - \dot{\omega}_r) 
= S(M\omega)\omega +\tau -M\dot\omega_r \notag\\
=&S(M\omega)\sigma +\tau -M\dot\omega_r+S(M\omega)\omega_r \notag\\
=&S(M\omega)\sigma - K_c \sigma + G \tilde{b} - \left( \alpha_{1} I_{3} + \alpha_{2}J^{T}  \right) z.
\end{align}
The error dynamics of the closed-loop system is completed by the dynamics of the bias-estimation error in \eqref{eq:bTilDp}.  Note that $x_1=0_{9\times 1}$ is the only equilibrium of the closed loop system \eqref{eq:zP2}, \eqref{eq:OmTilP}, and \eqref{eq:bTilDp}. Therefore, according to item (3) of Lemma \ref{lem2}, the set of equilibria for $ \big( e,\sigma ,\tilde{b} \big)  \in \mathcal{S}^{3}\times\mathbb{R}^{3}\times\mathbb{R}^{3}$ is given in item (i).

To prove item (ii), consider the following Lyapunov function candidate 
\begin{equation}\label{eq:V2}
V_{2}(x_1,t) = \frac{1}{2}\sigma^{T}M\sigma + \frac{1}{2}\| \tilde{b} \|^{2} + \frac{\alpha_2}{2} \| z \|^{2} + \alpha_{1} e_R ,
\end{equation}
$\forall x_1=[ z^T,\sigma^T ,\tilde{b}^T]^T\in   \mathbb{R}^{3} \times\mathbb{R}^{3}\times\mathbb{R}^{3} $. Note that in the region $\mathcal X$ the equilibrium $x_1=0_{9\times 1} \implies e=\pm\hat 1$ because $2\lambda_{w,3}\|e_v\|^2\leq 2e_v^T We_v= e_R<2\lambda_{\omega,3}$ implies $\|e_v\|<1$, and therefore the undesired equilibria are excluded from the region $\mathcal X$. $V_{2}$ is bounded by
\begin{align}
\gamma_{1}\|x_1\|^{2} &\leq V_{2}(x_1,t)\leq \gamma_{2}\|x_1\|^{2} , \label{eq:V2b}
\end{align}
with $ \gamma_{1}=\frac{1}{2}\min \left\{ \lambda_{\min}(M) , 1, \alpha_2  \right\}$, and 
$\gamma_{2}= \frac{1}{2}\max\left\{ \lambda_{\max}(M), 1, \alpha_{2}+\beta \right\}$  for some $\beta >0$ according to item (5) of Lemma \ref{lem2}.

The time derivative of \eqref{eq:V2} along the error dynamics \eqref{eq:zP2}, \eqref{eq:OmTilP}, and \eqref{eq:bTilDp},  
and the time derivative of $e_R$ in \eqref{eq:cfgP} written as $\dot{e}_{R} = z^{T} (\omega - \omega_d) = z^{T} \sigma-\lambda_c z^{T}z $, is 
\begin{align}\label{eq:V2p}
\dot{V}_2 &= \sigma^{T}M \dot\sigma + \tilde{b}^{T}\dot{\tilde{b}} + \alpha_{2}z^{T}\dot{z} + \alpha_{1} \dot{e}_R \notag\\
  &= -\sigma^{T} K_c \sigma   +\sigma^{T}G\tilde{b} - \tilde{b}^{T} K_{f}\tilde{b} -\lambda_{c}z^{T} \left( \alpha_{1} I_{3} + \alpha_{2} J \right) z \nonumber \\
  &= - x_1^{T}Q_{1}x_1 \leq -\lambda_{\min}(Q_{1})\|x_1\|^{2} \leq -\frac{\lambda_{\min}(Q_{1})}{\gamma_{2}}V_{2},
\end{align}
% \begin{align}\label{eq:V2p}
% \dot{V}_2 &= \sigma^{T}M \dot\sigma + \tilde{b}^{T}\dot{\tilde{b}} + \alpha_{2}z^{T}\dot{z} + \alpha_{1} \dot{e}_R \notag\\
% &= \sigma^{T} \left( S(M\omega)\sigma - K_c \sigma + G \tilde{b} -\left( \alpha_{1} I_{3} + \alpha_{2}J^{T}  \right) z \right)  \nonumber \\
% & \quad + \tilde{b}^{T} \left( -K_{f}\tilde{b} \right) + \alpha_{2}z^{T}\left(   J \sigma -\lambda_c J z + S(z)\omega_d  \right) \nonumber\\
% &\quad + \alpha_{1} \left( z^{T} \sigma-\lambda_c z^{T}z \right)\nonumber\\
%   &= -\sigma^{T} K_c \sigma   +\sigma^{T}G\tilde{b} - \tilde{b}^{T} K_{f}\tilde{b} -\lambda_{c}z^{T} \left( \alpha_{1} I_{3} + \alpha_{2} J \right) z \nonumber \\
%   &= - x_1^{T}Q_{1}x_1 \leq -\lambda_{\min}(Q_{1})\|x_1\|^{2} \leq -\frac{\lambda_{\min}(Q_{1})}{\gamma_{2}}V_{2},
% \end{align}
where  
\begin{equation}\label{eq:Q1}
    Q_{1} = \left[
	\begin{array}{ccc}
		\lambda_{c}\left( \alpha_{1} I_{3} + \alpha_{2} J \right)  & 0  & 0  \\
		 & K_c &-\frac{1}{2}G \\
		0 & -\frac{1}{2}G & K_{f} 
	\end{array}	  
  \right] ,
\end{equation}
is positive definite under conditions \eqref{eq:La} and \eqref{eq:CondSty}. Therefore, by Theorem 4.10 of \cite{khalil2002nonlinear}  it follows that
\begin{equation*}
    \|x_1(t)\|\leq \left( \frac{\gamma_{2}}{\gamma_{1}}\right)^{1/2}\|x_1(t_{0})\|\mathrm{e}^{-\left( \frac{\lambda_{\min}(Q_{1}}{2\gamma_{2}})\right)(t-t_{0})}.
\end{equation*}
Then the equilibrium $x_1 =0_{9\times 1}$ is globally exponentially  stable $\forall x_1(0)= [z^T(0),\sigma^T(0) ,\tilde{b}^T(0)]^T\in  \mathbb R^3 \times\mathbb{R}^{3}\times\mathbb{R}^{3} $. This implies that  $\big( e,\sigma,\tilde{b}\big)= \big( \pm \hat{1},0_{3\times 1}, 0_{3\times 1} \big)$ is (locally) exponentially stable  for $\big( e(0),\sigma(0) ,\tilde{b}(0)\big) \in \mathcal{X}$. %\mathcal{S}^{3} \backslash B_e\times \mathbb R^3\times \mathbb R^3 \}$. 

To prove item (iii), that is, the instability of the undesired equilibria $e= e_{j}$, $j=1,2,3$,  the Lyapunov function $V_{2}$ near the equilibrium $e= e_{j}$ is analyzed. 
With little notation abuse, denote $V_2(x_1,t)$ by $V_2(e,\omega_e,\tilde b)$. It is sufficient to show, by Chetaev's Theorem (Theorem 4.3, \cite{khalil2002nonlinear}), that if $V_{2}\left(e^{*}_{j}, \sigma^{*},\tilde{b}^{*}\right) < V_{2}\left( e_{j}, 0_{3\times 1},0_{3\times 1}\right)$, for some $\left(e^{*}_{j},  \sigma^{*},\tilde{b}^{*}\right)$ arbitrarily close to the undesired equilibrium $\left(e_{j}, 0_{3\times 1},0_{3\times 1}\right)$, then the equilibrium $\left(e_{j}, 0_{3\times 1},0_{3\times 1}\right)$ is unstable since $V_{2}$ is a non-increasing function. 

Following \eqref{eq:SmllR} in Appendix \ref{App1}, a small rotation on $e_{j}$ is given by
\begin{equation}\label{eq:SmllR2}
    e^{*}_{j} =   
    \left[ \begin{array}{c}
         0\\
         v_{w,j} 
    \end{array} \right]\otimes \left[ \begin{array}{c}
         \sqrt{1-\epsilon^{2}}  \\
        \epsilon v_{w,j}
    \end{array} \right]^{T} =
    \left[ \begin{array}{c}
         -\epsilon \\
        \left( \sqrt{1-\epsilon^{2}}\right) v_{w,j}
    \end{array} \right],
\end{equation}
for an arbitrarily small  $1>\epsilon>0$, where $\otimes$ denotes the quaternion product. 

Define $\Delta V \vcentcolon = V_{2}\big( e^{*}_{j}, \sigma^{*},\tilde{b}^{*}\big) - V_{2}\big(e_{j}, 0_{3\times 1},0_{3\times 1}\big)$. By \eqref{eq:V2}, \eqref{eq:SmllR2}, and \eqref{eq:NormZ}-\eqref{eq:NormEr} in Appendix \ref{App1}, it has
\begin{align*}
 \Delta V_2 &=
    \frac{1}{2}\sigma^{*T}M\sigma^{*} + \frac{1}{2} \tilde{b}^{*T}\tilde{b}^{*} + 2\alpha_{2}\epsilon^{2}\lambda^{2}_{j}\left( 1-\epsilon^{2} \right) \notag \\
    &\quad + 2 \alpha_{1} \left( 1-\epsilon^{2} \right)\lambda_{j} - 2 \alpha_{1}\lambda_{j} \notag  \\
    &= \frac{1}{2}\sigma^{*T}M\sigma^{*} + \frac{1}{2} \tilde{b}^{*T}\tilde{b}^{*} + 2\alpha_{2}\epsilon^{2}\lambda^{2}_{j}\left( 1-\epsilon^{2} \right) -2\alpha_{1}\lambda_{j}\epsilon^{2} \notag \\
    &= \frac{1}{2}\sigma^{*T}M\sigma^{*} + \frac{1}{2} \tilde{b}^{*T}\tilde{b}^{*} - 2 \lambda_{j} \left(  \alpha_{1} - \alpha_{2}\lambda_{j}\left( 1-\epsilon^{2} \right)\right)\epsilon^{2} \\
    &<0 ,
\end{align*}
 for $ \frac{1}{4 \lambda_{j} \left(  \alpha_{1} - \alpha_{2}\lambda_{j}\left( 1-\epsilon^{2} \right)\right)} \left(  \sigma^{*T}M\sigma^{*} + \tilde{b}^{*T}\tilde{b}^{*}\right)<  \epsilon^{2}$.
%\begin{equation*}
%   \frac{1}{4 \lambda_{j} \left(  \alpha_{1} - \alpha_{2}\lambda_{j}\left( 1-\epsilon^{2} \right)\right)} \left(  %\sigma^{*T}M\sigma^{*} + \tilde{b}^{*T}\tilde{b}^{*}\right)<  \epsilon^{2}.
%\end{equation*}
Therefore, under the condition  \eqref{eq:La} and  $\forall \epsilon>0$, there always exists
$\left(e^{*}_{j},  \sigma^{*},\tilde{b}^{*}\right)$ arbitrarily close to the undesired equilibrium $\left(e_j, 0_{3\times 1},0_{3\times 1}\right)$, such that $V_{2}\left(e^{*}_{j}, \sigma^{*},\tilde{b}^{*}\right) < V_{2}\left(e_j, 0_{3\times 1},0_{3\times 1}\right)$. This shows the instability of the three undesired equilibria.
\end{proof}

Note that the attraction region in item (ii) of Theorem \ref{thm2} may be enlarged to cover almost the entire space of $ \mathcal{S}^{3} \times\mathbb{R}^{3}\times\mathbb{R}^{3}$, as the following corollary states. 

\begin{defn}[Almost global asymptotic stability  (AGAS) and almost semi-global exponential stability (ASGES)) \cite{lee2015global}]
 Let the origin be an equilibrium of a dynamic system. The equilibrium is of:
\begin{itemize}
    \item[i) ] AGAS, if it is asymptotically stable and the set of initial states that do not converge to the origin has zero Lebesgue measure.
\item[ii) ] ASGES, if it is asymptotically stable and for almost all initial states, there exist finite controller parameters such that the corresponding trajectory exponentially converges to the origin.
\end{itemize}
\end{defn}

\begin{col}
\emph{\textbf{(ASGES and AGAS):}}\label{prop1}
The equilibrium  $\big( \pm \hat{1},0_{3\times 1}, 0_{3\times 1} \big)$ is of AGAS $\forall \big( e(0),\sigma(0) ,\tilde{b}(0) \big)\in  \mathcal{S}^{3}\backslash \{e_{j}\} \times\mathbb{R}^{3}\times\mathbb{R}^{3} $, and ASGES 
    $\forall \big( e(0),\sigma(0) ,\tilde{b}(0) \big)\in  \mathcal{S}^{3} \backslash \bigcup B_j \times\mathbb{R}^{3}\times\mathbb{R}^{3} $,   where  $e_{j}$ and $B_j$ are defined in item 5) of Lemma \ref{lem1}.
\end{col}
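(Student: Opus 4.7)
The plan builds directly on the Lyapunov function $V_{2}$ of \eqref{eq:V2} and its derivative estimate $\dot V_{2}\leq -\lambda_{\min}(Q_{1})\|x_{1}\|^{2}$ from \eqref{eq:V2p}, both of which were established in Theorem \ref{thm2} without any restriction on the state $(e,\sigma,\tilde b)$.

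For part (i), AGAS, I would first combine monotonicity of $V_{2}$ with its negative-definite derivative to get $x_{1}=[z^{T},\sigma^{T},\tilde b^{T}]^{T}\in L^{2}\cap L^{\infty}$; boundedness of $\omega_{d},\dot\omega_{d}$ together with smoothness of the closed-loop vector field on the compact factor $\mathcal{S}^{3}$ makes $\dot x_{1}$ bounded, so Barbalat's lemma yields $x_{1}(t)\to 0$. By item 3 of Lemma \ref{lem2}, $z(t)\to 0_{3\times 1}$ forces $e(t)$ into the equilibrium set $\{\pm\hat 1,e_{1},e_{2},e_{3}\}$ already identified in Theorem \ref{thm2}(i). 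Since Theorem \ref{thm2}(iii) proved each $e_{j}$ to be unstable via Chetaev, the standard topological argument used for this same purpose in \cite{lee2015global} implies that the stable set of each $e_{j}$ under the smooth closed-loop field has zero Lebesgue measure. Removing the three points $\{e_{j}\}$ from the $e$-factor therefore yields AGAS on $(\mathcal{S}^{3}\setminus\{e_{j}\})\times\mathbb{R}^{3}\times\mathbb{R}^{3}$.

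For part (ii), ASGES, I would exploit Lemma \ref{lem2} item 5 directly: on $\mathcal{S}^{3}\setminus\bigcup B_{j}$, for any $\alpha_{1}>0$ there exists a finite $\beta>0$ giving $\alpha_{1}e_{R}\leq\tfrac{\beta}{2}\|z\|^{2}$, which is precisely the estimate used in the proof of Theorem \ref{thm2} to produce the quadratic upper bound $V_{2}\leq \gamma_{2}\|x_{1}\|^{2}$ with $\gamma_{2}=\tfrac{1}{2}\max\{\lambda_{\max}(M),1,\alpha_{2}+\beta\}$. Together with $V_{2}\geq \gamma_{1}\|x_{1}\|^{2}$ and \eqref{eq:V2p}, this delivers $\dot V_{2}\leq -(\lambda_{\min}(Q_{1})/\gamma_{2})V_{2}$ and hence the exponential bound $\|x_{1}(t)\|\leq \sqrt{\gamma_{2}/\gamma_{1}}\,\|x_{1}(0)\|\,e^{-(\lambda_{\min}(Q_{1})/2\gamma_{2})t}$. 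For a given initial state off $\bigcup B_{j}$, I would choose $\epsilon$ small (and $\beta,\alpha_{1}$ correspondingly large) so that, using $\alpha_{1}e_{R}(t)\leq V_{2}(0)$ from monotonicity of $V_{2}$, the $V_{2}(0)$-sublevel set of $V_{2}$ stays disjoint from $\bigcup B_{j}\times\mathbb{R}^{3}\times\mathbb{R}^{3}$; this gives forward-invariance in the region where the quadratic upper bound is valid. Since the ASGES definition allows the controller and observer parameters to be finite functions of the initial state, this suffices.

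The main obstacle I expect is exactly this forward-invariance step: the quadratic upper bound on $V_{2}$ holds only off $\bigcup B_{j}$, so one must check that the sublevel set $\{V_{2}\leq V_{2}(0)\}$ avoids all three balls $B_{1},B_{2},B_{3}$ simultaneously. The freedom to tune $\alpha_{1},\alpha_{2},\epsilon,\beta$ in terms of the initial energy should close this gap, but the bookkeeping (especially near the unstable equilibria $e_{j}$, where the balls concentrate) is the delicate part. The remainder is a routine substitution into the estimates already performed in Theorem \ref{thm2}.
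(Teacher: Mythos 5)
Your part (i) is correct and follows the paper's route: the paper's own proof simply observes that on all of $\mathcal S^3$ only the quadratic \emph{upper} bound in \eqref{eq:V2b} is lost, so $\dot V_2\le -x_1^T Q_1 x_1<0$ from \eqref{eq:V2p} still forces $x_1\to 0_{9\times 1}$, and then discards the bad initial states as a zero-measure set. Your $L^2\cap L^\infty$/Barbalat argument together with the measure-zero stable sets of the unstable equilibria $e_j$ (via \cite{lee2015global}) fills in exactly what the paper compresses into one sentence, and is in fact sharper, since the non-converging set is the union of those stable sets rather than just the three points.

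Part (ii) contains the genuine gap, and it is precisely the step you flagged: your sublevel-set forward-invariance mechanism cannot deliver the claimed domain $\mathcal S^3\setminus\bigcup B_j$. On each arc $B_j$ one has $e=\left[-\rho,\ \sqrt{1-\rho^2}\,v_{w,j}^T\right]^T$, hence $e_R=2(1-\rho^2)\lambda_{w,j}\ge 2(1-\epsilon^2)\lambda_{w,3}$, so disjointness of $\{V_2\le V_2(0)\}$ from $\bigcup B_j\times\mathbb{R}^3\times\mathbb{R}^3$ requires $V_2(0)<2\alpha_1(1-\epsilon^2)\lambda_{w,3}$ (test points on $B_3$ with $\sigma^*=\tilde b^*=0_{3\times 1}$). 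Since $V_2(0)\ge\alpha_1 e_R(0)$, this forces $e_R(0)<2(1-\epsilon^2)\lambda_{w,3}$, a condition from which $\alpha_1$ cancels (both sides scale linearly in $\alpha_1$), which $\beta$ never enters (it only affects $\gamma_2$, not $V_2$), and which shrinking $\epsilon$ relaxes only to $e_R(0)<2\lambda_{w,3}$. Whenever $\lambda_{w,1}>\lambda_{w,3}$, attitudes such as $e=\left[\rho,\ \sqrt{1-\rho^2}\,v_{w,1}^T\right]^T$ with small $\rho>0$ lie outside every $B_j$ (the arcs carry scalar part $-\rho\le 0$) yet have $e_R$ near $2\lambda_{w,1}>2\lambda_{w,3}$; this is a positive-measure set on which no tuning of $\alpha_1,\alpha_2,\epsilon,\beta$ closes your step, so your argument only recovers exponential stability on essentially the region $\mathcal X$ of Theorem \ref{thm2}(ii), not the corollary's almost-semiglobal region. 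The structural reason is that the $B_j$ are thin one-parameter arcs, while a $V_2$-sublevel set discriminates points only through the scalar $V_2$, so keeping $e_R$ below $2(1-\epsilon^2)\lambda_{w,3}$ is vastly stronger than avoiding the arcs. The paper imposes no invariance condition at all: it fixes $\alpha_1$, takes $\epsilon$ arbitrarily small and $\beta$ large so that \eqref{eq:bcond}, hence the sandwich \eqref{eq:V2b} and the decay \eqref{eq:V2p}, hold on all of $\mathcal S^3\setminus\bigcup B_j$. The clean repair of your proof is to drop the sublevel bookkeeping and combine the estimate with your own part (i): almost every trajectory converges to $\pm\hat 1$, hence after a finite, initial-state-dependent time it remains in a region where the quadratic sandwich is valid, and eventual exponential decay is exactly what the ASGES definition (parameters and constants allowed to depend on the initial state) requires.
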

\begin{proof}
For $e\in\mathcal S^3$,  the inequality \eqref{eq:bcond} is no longer held, nor the upper bound on the Lyapunov function $V_2$ in \eqref{eq:V2}. It follows from the proof of Theorem \ref{thm2} that $\dot V_2<0$, which shows the asymptotic stability of $x_1=0_{9\times 1}$. Since the Lebesgue measure of the unstable set $\{(e_{j},0_{3\times 1}, 0_{3\times 1})\}$ is zero, the equilibrium  $\big( \pm \hat{1},0_{3\times 1}, 0_{3\times 1} \big)$ is AGAS. On the other hand, $\forall \epsilon>0$, the proof of item (ii) in Theorem \ref{thm2} holds for $e\in \mathcal{S}^{3} \backslash \bigcup B_j$. Note that the set $\mathcal{S}^{3} \backslash \bigcup B_j$ can be arbitrarily enlarged for a given $\alpha_1$ and any arbitrarily small $\epsilon>0$  by a sufficiently large $\beta$ in item (5) of Lemma \ref{lem1} to cover almost entire  $\mathcal S^3$, the equilibrium  $\big( \pm \hat{1},0_{3\times 1}, 0_{3\times 1} \big)$ is thus ASGES.
\end{proof}

\begin{rem}[Comparison with some reported controllers]
By exploiting the relationship between the error variables $z$ and $e_R$ near the  equilibria characterized by item (i) of Theorem \ref{thm2} as established in item 5) of Lemma \ref{lem1}, it allows one to propose a strict Lyapunov function \eqref{eq:V2} for the desired equilibrium. This in turn enables the controller \eqref{eq:CtrlOF} to achieve AGAS and ASGES under the conditions stated in Theorem \ref{thm2}, contrasting with only AGAS in \cite{pounds2007attitude,mercker2011rigid,tayebi2013inertial,forbes2013passivity}. 
%Also, similar to the results of  \cite{pounds2007attitude,forbes2013passivity}, no explicit or implicit attitude estimation as in \cite{mercker2011rigid,tayebi2013inertial} is involved in the  controller \eqref{eq:CtrlOF}-\eqref{eq:OmrEp}. Instead,  
%Also, by using the vector measurements directly in the controare incorporated into the controller directly.  In this way, it reduces the on-line computation load due to the observer, simplifies the stability analysis, and eliminates the undesired equilibria introduced by the observer in the closed-loop. 
\end{rem}

\begin{rem}[A separation property] 
Note that the bias observer \eqref{eq:bb}-\eqref{eq:be} is used in the control loop without modifications. This implies that the bias observer \eqref{eq:bb}-\eqref{eq:be} and controller \eqref{eq:CtrlOF} may be designed separately provided that the convergence rate $\lambda_o$ in \eqref{eq:CondSty} of the observer is sufficiently large. This condition may also be obtained through finite-time convergence, e.g., \cite{hamrah2021finite}.

%Therefore, any attitude controller that is asymptotically stable together with the proposed bias observer \eqref{eq:be}-\eqref{eq:vfp} in a controller-observer scheme will result in an asymptotically stable overall system provided that the coupling term $G$ is bounded. Furthermore, if the attitude controller is globally stable, so is the controller-observer scheme.

% From the above proof, it can be noted that Note that the bias observer \eqref{eq:be}-\eqref{eq:bb} is used  in the control loop without modifications. Furthermore, the controller \eqref{eq:CtrlOF} without bias estimation considering $b = 0_{3\times 1}$ $\forall t\geq 0$ is asymptotically stable, which can easily be seen by letting $G\tilde{b}=0_{3\times 1}$ in the previous analysis. Then, the bias observer \eqref{eq:be} and controller \eqref{eq:CtrlOF} for $\tilde{b} = 0_{3\times 1}$ are stable, separately, and they are asymptotically stable together, as Theorem \ref{thm2} proves, which implies that the observer-controller possesses a separation property. Therefore, any attitude controller that is asymptotically stable together with the proposed bias observer \eqref{eq:be}-\eqref{eq:vfp} in a controller-observer scheme will result in an asymptotically stable overall system provided that the coupling term $G$ is bounded. Furthermore, if the attitude controller is globally stable, so is the controller-observer scheme.
\end{rem}

\begin{rem}[Controller tuning] 
The controller and the observer can be tuned as follows to satisfy the conditions in Theorem \ref{thm2}: 1) assign the weights $k_i$ to the observer \eqref{eq:bb}-\eqref{eq:be} according to the confidence level of each sensor, 2) choose the controller gains $K_c = K^{T}_c >0$, $\lambda_{c}>0$ , $\alpha_{1}>0$, $\alpha_{2} >0$ such that the condition \eqref{eq:La} holds, 3) choose the filter gain $\gamma_f$ as in Lemmas \ref{lem1} and the matrix gain $\Lambda_i$ in \eqref{eq:bb}-\eqref{eq:be} such that the condition \eqref{eq:CondSty} holds.
%Notice that the weights $k_i$ in \eqref{eq:Kf} and $e_R$ \eqref{eq:CfgErr} may be  different without altering the stability proofs. 

\end{rem}

\subsection{Adaptive Attitude Tracking Controller}\label{Sec:CtrlAdp}

%To deal with the unknown inertia matrix $M$ in pursuing the control objective, this section designs an adaptation law to estimate it online.  
To deal with the unknown inertia matrix, the first two terms in the control law \eqref{eq:CtrlOF} are replaced by an adaptive compensation  in the following proposed adaptive control law:
\begin{equation}
\tau = Y(\hat{\omega} ,h)\hat{\theta} -K_{c}\hat{\sigma} - \left( \alpha_{1} I_{3} + \alpha_{2}J^{T}  \right) z ,\label{eq:CtrlAdp}\\
\end{equation}
where $\hat{\sigma} = \hat{\omega}-\omega_{r}$, and the controller gains $K_c,\ \alpha_1, \ \alpha_2$ are the same as in the control law \eqref{eq:CtrlOF}, $Y(\hat{\omega} ,h)$ is the regressor
\begin{align}
Y&(\hat{\omega}, h) := S(\hat{\omega})F_{1}(\hat{\omega})+F_{1}(h) , \label{eq:Yreg} \\
h& := -\lambda_{c}\Big( J\big( \hat{\omega} -\omega_{d} \big) + S(z)\omega_{d} \Big) + \dot{\omega}_{d} + \big( \alpha_{1} I_{3} + \alpha_{2}J^{T}  \big) z , \notag %\label{eq:h} 
\end{align}
$F_{1}(u)$,  for a vector $u= \left[ u_{1},u_{2},u_{3}\right]^T$, is defined as 
\begin{align*}
F_{1}(u) &\vcentcolon = \left[ \begin{array}{cccccc}
u_{1} & 0 & 0 & 0 & u_{3} & u_{2}  \\
0 & u_{2} & 0 & u_{3} & 0 & u_{1}\\
0 & 0 & u_{3} & u_{2} & u_{1} & 0 
\end{array} \right],
\end{align*}
and $\hat{\theta}(t)$ is the parameter estimation of the inertia matrix entries $\theta = \left[ m_{11},m_{22},m_{33},m_{23},m_{13},m_{12} \right]^{T}\in \mathbb{R}^{6}$, 
%$\theta$ defined as follows, according to the entries of the inertia matrix $M$ in \eqref{eq:Inertia}
%\begin{align}
%\Theta &= \left[ m_{11},m_{22},m_{33},m_{23},m_{13},m_{12} \right]^{T}\in \mathbb{R}^{6}  .\label{eq:TT}
%\end{align}
%The estimation $\hat\Theta(t)$ 
which is updated according to %the following adaptation law
\begin{align}\label{eq:thtE}
    \dot{\hat{\theta}} =&  -\Gamma Y^{T}(\hat{\omega},h)\hat\sigma,
\end{align}
where $\Gamma\in\mathbb{R}^{6\times 6}$, $\Gamma=\Gamma^{T}>0$ is the adaptation gain.  The estimated gyro bias $\hat{b}$ used to calculate $\hat\omega=\omega_g-\hat b$ is obtained by the following modified observer
\begin{equation}\label{eq:beAd}
    \hat{b} = \mu_{b}\mathrm{Tanh}\left(\bar{b}\right) - \sum_{i=1}^{n}k_{i}S^{T}(v_{f_i})\Lambda_{i}v_{i} ,
\end{equation}
{\footnotesize
\begin{equation}\label{eq:bbAd}
\dot{\bar{b}} = \frac{1}{\mu_{b}}\mathrm{Cosh}^{2}\left(\bar{b}\right)\left(K_{f}\hat{\omega}+\sum_{i=1}^{n}k_{i}S(\Lambda_{i}v_{i})\dot{v}_{f_i}  - \left( \alpha_{1} I_{3} + \alpha_{2}J^{T} \right) z \right),  
\end{equation}}
where  $\mu_{b} \geq \theta_{b}$ with $\theta_b$ the bound on $\|b\|$ in Assumption A2. The  hyperbolic functions  are defined entry-wise:
\begin{align*}
\mathrm{Tanh}(\bar{b})&\vcentcolon =[\tanh{(\bar{b}_{1})},\tanh{(\bar{b}_{2})}, \tanh{(\bar{b}_{3}))} ]^{T}\in \mathbb{R}^3 ,\\
\mathrm{Cosh}(\bar{b})&\vcentcolon = \mathrm{diag}\lbrace \cosh{(\bar{b}_{1})},\cosh{(\bar{b}_{2})},\cosh{(\bar{b}_{3})} \rbrace \in \mathbb{R}^{3\times 3}, \\
\mathrm{Sech}(\bar{b})&\vcentcolon = \mathrm{diag}\lbrace \mathrm{sech}{(\bar{b}_{1})},\mathrm{sech}{(\bar{b}_{2})},\mathrm{sech}{(\bar{b}_{3})} \rbrace \in \mathbb{R}^{3\times 3}.
\end{align*}
Note that $\hat b$ in \eqref{eq:beAd} is bounded by $\|\hat b\|\leq \sqrt{3}\mu_b+\sum_{i=1}^n k_i\lambda_{max}(\Lambda_i)$. %, so is $\tilde{b}=\hat{b}-b$. 
%compared to \eqref{eq:be}. Likewise, the estimation error $\tilde{b}=\hat{b}-b$ is bounded. 
The stability of the proposed adaptive controller is summarized in the following theorem.

\begin{thm}
\emph{\textbf{(Adaptive controller):}}\label{thm3}
Let the design parameters be chosen as follows: 
\begin{itemize}
    \item Gyro-bias observer: $\Lambda_{i}=\Lambda^{T}_{i} >0$, $k_{i} >0$, for $i=1,2,\cdots , n$, and $\gamma_f >0$, $\mu_{b} \geq \theta_{b}$,
     \item Adaptive controller:  $K_c = K^{T}_c >0$, $\lambda_{c}>0$ , $\alpha_{1}>0$, $\alpha_{2} >0$, $\Gamma=\Gamma^{T}>0$,
\end{itemize}
such that condition \eqref{eq:La} is satisfied and
\begin{align}
&\lambda_o  = \lambda_{\min}(K_{o}) - \epsilon_{f}\sum_{i=1}^{n}k_{i}\lambda_{\max}(\Lambda_i) > 0, \label{eq:CondAdp}\\      
&\lambda_{\min}(K_{c})  >  \frac{\| H \|^{2}}{4\lambda_{o}}+\|S(\tilde{b})M\| , \label{eq:CondAdp1}
\end{align}
where $ H \vcentcolon = M\left( K_{f} + \lambda_{c}J \right) +S(M\omega_{r}) -S(\tilde{b}+\omega_{r})M$, which is bounded.
%\begin{equation}
%    H \vcentcolon = M\left( K_{f} + \lambda_{c}J \right) +S(M\omega_{r}) -S(\tilde{b}+\omega_{r})M. \label{eq:Hmat}
%\end{equation}
 Then the adaptive controller \eqref{eq:CtrlAdp}, the gyro-bias observer \eqref{eq:beAd},  and the adaptation law \eqref{eq:thtE}, in closed loop with the system \eqref{eq:KinematicsR}-\eqref{eq:DynamicsB}, have the following properties, for $j=1,2,3$: 
 \renewcommand{\labelenumi}{(\roman{enumi})}
 \begin{enumerate}
    \item All variables are bounded, and $\left( e,\hat \sigma ,\tilde{b} \right) \rightarrow \left( \pm \hat{1},0_{3\times 1} , 0_{3\times 1} \right) $  almost globally asymptotically 
    %This implies the asymptotic convergence of $R\to R_{d}$, $\omega\to \omega_{d}$,  and $\hat{b}\to b$,  
    $\forall\left( e(0),\hat\sigma(0) ,\tilde{b}(0) , \tilde{\theta}(0) \right)\in \mathcal{X}_{a}\vcentcolon =  \mathcal{S}^{3} \backslash \bigcup B_j \times\mathbb{R}^{3}\times\mathbb{R}^{3}\times\mathbb{R}^{6}$,  where $\tilde\theta=\hat\theta-\theta$ is the parameter estimation error.
    \item The three undesired equilibria $\big( e_{j} , 0_{3\times 1}, 0_{3\times 1},0_{6\times 1} \big)$ are unstable. 
\end{enumerate}
\end{thm}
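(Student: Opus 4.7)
The plan is to extend the proof of Theorem \ref{thm2}, trading exponential stability for asymptotic convergence (via Barbalat) and adding the parameter-error state $\tilde{\theta}=\hat\theta-\theta$ together with the redesigned bias observer \eqref{eq:beAd}-\eqref{eq:bbAd}. First I would derive the closed-loop error dynamics in $x=[z^{T},\hat\sigma^{T},\tilde b^{T},\tilde\theta^{T}]^{T}$. Using $\omega-\omega_d=\hat\sigma+\tilde b-\lambda_c z$ in \eqref{eq:zP} yields $\dot z = J\hat\sigma+J\tilde b-\lambda_c J z+S(z)\omega_d$. Differentiating \eqref{eq:beAd}, the identity $\mu_b\,\mathrm{Sech}^{2}(\bar b)\cdot \mathrm{Cosh}^{2}(\bar b)/\mu_b = I_3$ together with the same filter-derivative cancellation used in Theorem \ref{thm1} produces $\dot{\tilde b}=-K_f\tilde b-(\alpha_{1}I_{3}+\alpha_{2}J^{T})z$; the extra $z$-feedback that is absent from \eqref{eq:bb} is precisely what will cancel the $z$-$\tilde b$ cross terms in the Lyapunov derivative. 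For the $\hat\sigma$-dynamics, I plug \eqref{eq:CtrlAdp} into $M\dot{\hat\sigma}=M\dot{\hat\omega}-M\dot\omega_r$, using $\dot\omega_r=\dot{\hat\omega}_r-\lambda_c J\tilde b$ and the linear parameterization identity $Y(\hat\omega,h)\theta=S(\hat\omega)M\hat\omega+Mh$ (which follows from $F_{1}(u)\theta=Mu$). After collecting the inertia-dependent pieces, one arrives at $M\dot{\hat\sigma}=-K_c\hat\sigma-(\alpha_{1}I_{3}+\alpha_{2}J^{T})z+Y\tilde\theta+H\tilde b+S(M\omega)\hat\sigma$ with $H$ exactly as in the statement, and $\dot{\tilde\theta}=-\Gamma Y^{T}\hat\sigma$ from \eqref{eq:thtE}.

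Next I take $V_{3}=\frac{1}{2}\hat\sigma^{T}M\hat\sigma+\frac{\alpha_{2}}{2}\|z\|^{2}+\alpha_{1}e_R+\frac{1}{2}\|\tilde b\|^{2}+\frac{1}{2}\tilde\theta^{T}\Gamma^{-1}\tilde\theta$, which is positive definite in $x$ on $\mathcal X_a$ by item 5 of Lemma \ref{lem2}. Differentiating along the closed loop: the $\tilde\theta$-contribution cancels via \eqref{eq:thtE}; the $\hat\sigma^{T}S(M\omega)\hat\sigma$ term is skew-symmetric and drops out; the $z$-$\tilde b$ cross terms from $\alpha_{1}\dot e_R+\alpha_{2}z^{T}\dot z$ (namely $+z^{T}(\alpha_{1}I_{3}+\alpha_{2}J)\tilde b$) cancel exactly with the ones from $\tilde b^{T}\dot{\tilde b}$ produced by the modified observer, and the analogous $z$-$\hat\sigma$ cross terms cancel likewise. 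What remains is $\dot V_{3} \leq -\lambda_c z^{T}(\alpha_{1}I_{3}+\alpha_{2}J)z-\hat\sigma^{T}\big(K_c-S(\tilde b)M\big)\hat\sigma+\hat\sigma^{T}H\tilde b-\tilde b^{T}K_{f}\tilde b$, and a Schur-complement argument together with \eqref{eq:La}, \eqref{eq:CondAdp} and \eqref{eq:CondAdp1} renders this quadratic form negative semi-definite in $(z,\hat\sigma,\tilde b)$.

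From $\dot V_3\leq 0$ all of $\hat\sigma,z,\tilde b,\tilde\theta$ are bounded, and therefore so are $\hat\omega,\omega,\omega_r,h,Y$, and $\bar b$ (via \eqref{eq:beAd}); uniform continuity of $\dot V_3$ then follows and Barbalat's lemma yields $(z,\hat\sigma,\tilde b)\to 0_{9\times 1}$. Combined with item 3 of Lemma \ref{lem2} and the exclusion of $\bigcup B_j$ from $\mathcal X_a$, this gives $e\to\pm\hat 1$ almost globally, proving item (i); $\tilde\theta$ need not vanish without a persistence-of-excitation hypothesis. For item (ii), I reuse the Chetaev argument from Theorem \ref{thm2} verbatim on $V_3$ with $\tilde\theta^{*}=0_{6\times 1}$, so that the computation of $\Delta V_3$ reduces literally to the $\Delta V_2$ computation and is strictly negative for arbitrarily small rotations \eqref{eq:SmllR2} of $e_j$ under condition \eqref{eq:La}.

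The main obstacle is arriving at the clean $H$-form of $M\dot{\hat\sigma}$: several inertia-dependent contributions (the Euler term $S(M\omega)\omega$ from the plant, the terms $S(\hat\omega)M\hat\omega$ and $M(\alpha_{1}I_{3}+\alpha_{2}J^{T})z$ from $Y(\hat\omega,h)\theta$, the $-M\dot{\tilde b}$ picked up from the modified observer, and $-M\dot\omega_r+M\dot{\hat\omega}_r=\lambda_c MJ\tilde b$) must be reshuffled via $\omega=\hat\omega+\tilde b$ until the only non-skew residuals are $H\tilde b$ plus an $S(\tilde b)M\hat\sigma$ piece; it is precisely the latter that forces the $\|S(\tilde b)M\|$ correction in \eqref{eq:CondAdp1}.
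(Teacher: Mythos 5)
Your proposal is correct and takes essentially the same route as the paper's proof: the same error state $[z^{T},\hat\sigma^{T},\tilde b^{T},\tilde\theta^{T}]^{T}$ with the modified observer supplying the $-(\alpha_{1}I_{3}+\alpha_{2}J^{T})z$ feedback that cancels the $z$--$\tilde b$ and $z$--$\hat\sigma$ cross terms, the identical Lyapunov function $V_{3}$, the bound $\dot V_{3}\leq-\bar{x}^{T}Q_{2}\bar{x}$ under \eqref{eq:La} and \eqref{eq:CondAdp}--\eqref{eq:CondAdp1}, Barbalat's lemma for $(z,\hat\sigma,\tilde b)\to 0_{9\times 1}$, and the Chetaev argument of Theorem \ref{thm2} reused with $\tilde\theta^{*}=0_{6\times 1}$ for item (ii). The only blemish is your intermediate display for $M\dot{\hat\sigma}$, where a residual $S(M\omega)\hat\sigma$ together with ``$H$ exactly as in the statement'' cannot both hold (the paper's \eqref{eq:OmSp} carries the residuals $-S(\tilde{b})M\hat{\sigma}-S(\hat{\sigma})M\tilde{b}$, of which only the second is annihilated by $\hat\sigma^{T}(\cdot)$), but your closing paragraph and your expression for $\dot V_{3}$ already perform the correct bookkeeping of the $S(\tilde{b})M\hat{\sigma}$ piece that forces the $\|S(\tilde{b})M\|$ term in \eqref{eq:CondAdp1}, so the argument as a whole matches the paper's.
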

%-----------------------------------------------Proof--------------------------------------------------------
\begin{proof} Let  $x_2:=[z^T\ {\hat\sigma}^T \ \tilde b^T \ \tilde\theta^T]^T$ denote the state of the closed loop system with the control law \eqref{eq:CtrlAdp}. Its dynamics is calculated as follows. 

The dynamics of the bias estimation error by \eqref{eq:beAd} and \eqref{eq:bbAd} is 
\begin{align}\label{eq:bTilDp_Ad}
    \dot{\tilde{b}} &= \dot{\hat{b}} - \dot{b} = \mu_{b}\mathrm{Sech}^{2}\left(\bar{b}\right)\dot{\bar{b}} -K_{f}\omega - \sum_{i=1}^{n}k_{i}S(\Lambda_{i}v_{i})\dot{v}_{f_i} \notag\\
    &=-K_{f}\tilde{b} - \left( \alpha_{1} I_{3} + \alpha_{2}J^{T}  \right) z  .
\end{align}

The dynamics of $\hat\sigma=\hat\omega-\omega_r$,  in view of \eqref{eq:VelM} and \eqref{eq:bTilDp_Ad}, is now given by 
%\begin{equation*}
%    M\dot{{{{{\hat\sigma}}}}} = M\dot{\hat{\omega}} - M\dot{\omega}_{r} = M\dot{\omega} - M\dot{\omega}_{r} - M\dot{\hat{b}}.
%\end{equation*}
%Substituting \eqref{eq:DynamicsB}, \eqref{eq:OmRp}, and \eqref{eq:bTilDp_Ad}, yields
\begin{align*}
M\dot{\hat\sigma} &= M\dot{\hat{\omega}} - M\dot{\omega}_{r} = M\dot{\omega} - M\dot{\omega}_{r} - M\dot{\hat{b}} \\
    =& S(M\omega)\omega +\tau -Mh + M\left( K_{f} + \lambda_{c}J\right) \tilde{b} \nonumber \\
    =& -\Big( S(\omega)F_{1}(\omega)+F_{1}(h)\Big) \Theta   + M\left( K_{f} + \lambda_{c}J\right)  \tilde{b} +\tau\\
     =& -Y(\omega, h)\Theta +  M\left( K_{f} + \lambda_{c}J\right)  \tilde{b} + \tau.
\end{align*}
Adding and subtracting $Y(\hat{\omega},h)\theta$ gives
\begin{align*}
    M\dot{\hat\sigma}&=  -Y(\omega, h)\theta +Y(\hat{\omega},h)\theta - Y(\hat{\omega},h)\theta \\
    &\quad + M\left( K_{f} + \lambda_{c}J\right)  \tilde{b} + \tau \\
    &= S(\hat{\omega})M\hat{\omega} - S(\omega)M\omega - Y(\hat{\omega},h)\theta \\
    &\quad + M\left( K_{f} + \lambda_{c}J\right)  \tilde{b} + \tau \\
    &= \left( S(M\hat{\sigma})-S(\hat{\sigma})M +S(M\omega_{r}) - S(\tilde{b}+\omega_{r})M \right)\tilde{b} \\
    &\quad - Y(\hat{\omega},h)\theta  + M\left( K_{f} + \lambda_{c}J\right)  \tilde{b} + \tau \notag\\
    &= -S(\tilde{b})M\hat{\sigma} -S(\hat{\sigma})M\tilde{b} - Y(\hat{\omega},h)\theta  + H\tilde{b} + \tau ,
\end{align*}
which in closed loop with the controller \eqref{eq:CtrlAdp} gives 
\begin{align}\label{eq:OmSp}
     M\dot{{{{{\hat\sigma}}}}} &= Y(\hat{\omega}, h) \tilde{\theta} - \left( K_{c} + S(\tilde{b})M\right)\hat{\sigma} -S(\hat{\sigma})M\tilde{b} \notag\\
     &\quad - \left( \alpha_{1} I_{3} + \alpha_{2}J^{T}  \right) z  + H \tilde{b}.
\end{align}
%Note that $H$  in \eqref{eq:Hmat} is bounded because the estimation error $\tilde{b}$ is bounded according to the bias observer \eqref{eq:beAd}.

The error state dynamics is completed with the following
\begin{align}
\dot{z} &= J \hat{\sigma} -\lambda_c J z + S(z)\omega_d +J\tilde{b}, \label{eq:zP3}\\
\dot {\tilde \theta}& = -\Gamma Y^{T}(\hat{\omega},h)\hat\sigma. \label{eq:tildetheta}
\end{align}
%\begin{equation} \label{eq:ThtTilDp}
%    \dot {\tilde \Theta} = -\Gamma Y^{T}(\hat{\omega},h)\hat{\omega}_e.
%\end{equation}

%$\dot{e}_{R} &= z^{T} \sigma-\lambda_c z^{T}z -z^{T}\tilde{b}, \label{eq:cfgP2}$

Note that the state $x_2$ of the closed-loop system \eqref{eq:zP3},  \eqref{eq:OmSp}, \eqref{eq:bTilDp_Ad}, and \eqref{eq:tildetheta}  has the only equilibrium at $x_2=0_{15\times 1}$. To determine its stability, consider the following Lyapunov function candidate 
\begin{equation}\label{eq:V3}
    V_{3} = \frac{1}{2}\hat\sigma^{T}M\hat\sigma + \frac{1}{2}\| \tilde{b} \|^{2} + \frac{1}{2}  \tilde{\theta}^{T}\Gamma^{-1}\tilde{\theta} + \frac{\alpha_2}{2} \| z \|^{2} + \alpha_{1} e_R ,
\end{equation}
$\forall x_2 \in  \mathbb{R}^{3}  \times\mathbb{R}^{3}\times\mathbb{R}^{3} \times \mathbb{R}^{6} $ and $e\in \mathcal{S}^{3} \backslash \bigcup B_j$.

The time evolution of \eqref{eq:V3} along the closed-loop dynamics, by using the fact that $\dot{e}_{R} = z^{T} \hat\sigma-\lambda_c z^{T}z +z^{T}\tilde{b}$, is given by
\begin{align}\label{eq:V3p}
    \dot{V}_{3} &= \hat\sigma^{T}M \dot{\hat\sigma} + \tilde{b}^{T}\dot{\tilde{b}} + \tilde{\theta}^{T}\Gamma^{-1}\dot{\tilde{\theta}}   +  \alpha_{2}z^{T}\dot{z} + \alpha_{1} \dot{e}_R \notag\\
    &= - \hat\sigma^{T}( K_{c} + S(\tilde{b})M)\hat\sigma + \hat\sigma^{T}H \tilde{b} -\tilde{b}^{T}K_{f}\tilde{b} \notag\\
    &\quad -\lambda_c z^{T}\left( \alpha_{1}I_{3} + \alpha_{2} J \right) z \notag\\
      &\leq -\bar{x}^{T} Q_{2}\bar{x},
\end{align}
%  \begin{align}\label{eq:V3p}
%      \dot{V}_{3} &= \hat\sigma^{T}M \dot{\hat\sigma} + \tilde{b}^{T}\dot{\tilde{b}} + \tilde{\theta}^{T}\Gamma^{-1}\dot{\tilde{\theta}}   +  \alpha_{2}z^{T}\dot{z} + \alpha_{1} \dot{e}_R \notag\\
%      &= \hat\sigma^{T}\left( Y(\hat{\omega}, h) \tilde{\theta} - ( K_{c} + S(\tilde{b})M)\hat\sigma -S(\hat\sigma)M\tilde{b} \right. \notag\\
%      &\quad \left. - \left( \alpha_{1} I_{3} + \alpha_{2}J^{T}  \right) z  + H \tilde{b} \right) \notag\\
%      &\quad + \tilde{b}^{T}\left( -K_{f}\tilde{b} - \left( \alpha_{1} I_{3} + \alpha_{2}J^{T}  \right) z \right) \notag\\
%      &\quad + \tilde{\theta}^{T}\Gamma^{-1}\left( -\Gamma Y^{T}(\hat{\omega},h)\hat\sigma \right) \notag\\
%      &\quad +  \alpha_{2}z^{T} \left( J \hat\sigma -\lambda_c J z + S(z)\omega_d +J\tilde{b} \right) \notag\\
%      &\quad + \alpha_{1}\left( z^{T} \hat\sigma-\lambda_c z^{T}z +z^{T}\tilde{b} \right) \notag \\
%      &= - \hat\sigma^{T}( K_{c} + S(\tilde{b})M)\hat\sigma + \hat\sigma^{T}H \tilde{b} -\tilde{b}^{T}K_{f}\tilde{b} \notag\\
%      &\quad -\lambda_c z^{T}\left( \alpha_{1}I_{3} + \alpha_{2} J \right) z \notag\\
%       &\leq -\bar{x}^{T} Q_{2}\bar{x},
%  \end{align}
where $\bar{x} \vcentcolon = \left[ \|z\|, \|\hat\sigma \|,  \| \tilde{b} \| \right]^{T}$, and
\begin{equation}\label{eq:Q2}
    Q_{2} = \left[
    \begin{array}{ccc}
         \lambda_{c}\lambda_{a}  &0  & 0  \\
         0& \lambda_{\min}(K_{c})-\|S(\tilde{b})M\| &  -\frac{1}{2}\|H\| \\
         0 & -\frac{1}{2}\|H\|  &  \lambda_{o} 
    \end{array}
    \right] ,
\end{equation}
which is semipositive definite under conditions \eqref{eq:CondAdp1}-\eqref{eq:CondAdp}, and $\dot{V}_{3} \leq 0$. This implies that the state $x_2$ bounded. Next, it is straightforward to verify that $\ddot{V}_3$ is bounded, and then $\left( z,\hat\sigma ,\tilde{b}\right) \rightarrow \left( 0_{3\times 1},0_{3\times 1} , 0_{3\times 1} \right)$ asymptotically by invoking Barbalat's Lemma. Moreover, $z\rightarrow0_{3\times 1}$ implies that $\left( e,\hat\sigma ,\tilde{b} \right) \rightarrow \left( \pm \hat{1},0_{3\times 1} , 0_{3\times 1} \right)$  almost globally asymptotically, and $ \tilde\theta$ remains bounded. Furthermore, the three undesired equilibria $(e_{j},0_{3\times 1} , 0_{3\times 1},0_{6\times 1})$ are unstable by the same arguments as in the proof of Theorem \ref{thm2}.  
\end{proof}

\begin{rem}[The Adaptive Controller] 
Compared to some similar adaptive control laws for attitude tracking of rigid bodies, the proposed scheme does not require an attitude estimation procedure, a property leveraged from the nonadaptive design in the previous subsection. Moreover,  in the proposed adaptive controller only 6 parameters of the unknown inertia matrix are estimated compared with the over-parameterization required in  \cite{mercker2011rigid,benallegue2018adaptive}. 
%does not present over-parameterized terms reducing the computational cost. 
Furthermore, in the adaptation law, neither restrictions on the smoothness of the desired trajectory nor parameter projections are needed as opposed to those used in \cite{benallegue2018adaptive}. 
%Besides, it relaxes the requirement of the six-time differentiable  of the parameter estimates which needs a highly smooth (six-time differentiable) desired angular velocity due to the projection operation in the adaptation law as in \cite{benallegue2018adaptive}. 
Note that when a persistent excitation condition on $Y\left( \hat{\omega},h\right)$, the exponential convergence of  $\tilde{\theta}\to 0_{6\times 1}$ can be established as in classical adaptive control designs  \cite{ioannou2012robust}. 
\end{rem}

\begin{rem}[Tuning procedure] 
The design parameters may be tuned to verify conditions \eqref{eq:La} and \eqref{eq:CondAdp}-\eqref{eq:CondAdp1} as follows: 1) Assign the weight $k_{i}$ according to the confidence  of each vector measurement $v_i$. 2) Choose the observer gains $\Lambda_{i}=\Lambda^{T}_{i} >0$, $\gamma_{f} >0$ to verify condition \eqref{eq:CondAdp}. Likewise, the observer bound $\mu_{b} >0$ is chosen to be $\mu_{b} \geq \theta_{b}$ according to  Assumption  A2. 3) The controller gains  $\alpha_{1}>0$ and  $\alpha_{2}>0$ are selected to fulfill the condition \eqref{eq:La}. 4) Fix a controller gain $\lambda_{c}>0$, and choose $K_{c}=K^{T}_{c}>0$ to verify \eqref{eq:CondAdp1}. 5) The design parameter $\Gamma=\Gamma^{T}>0$ in the adaptive law may be set independently of the design parameters of the observer and the controller.
\end{rem}

\section{Simulations}\label{Sec:Sim}

The adaptive controller \eqref{eq:CtrlAdp} under noisy measurements was simulated. Three inertial reference vectors  $r_{1} = \left[ 0,0,1 \right]^{T}$, $r_{2} = \left( 1/\sqrt{3}\right) \left[ 1,1,1 \right]^{T}$ and $r_{3} = r_{1}\times r_{2}/\|r_{1}\times r_{2}\|$ were considered. The desired trajectory was chosen as in \cite{mercker2011rigid} given by $\omega_{d} = [ \cos(t)+0.5\cos(0.2t) , \; 0.75\sin(2t) ,\; \sin\left( 5t\mathrm{e}^{-0.001t} \right) + \cos(0.5t) ]^{T}$ (rad/s). The initial conditions were $q(0)= [-1,0,0,0]^{T}$, $\omega (0)=[0,0,0]^{T}$ (rad/s), $q_{d}(0)=[0.8,0,0.6,0]^{T}$, and $\omega_{d}(0)=[1.5,0,1]^{T} $ (rad/s).

The parameters are as follows. The inertia parameter vector was $\theta = \left[ 0.0360,0.0869,0.0935, 0.0004,0.0015,-0,0007\right]^{T}$ (Kgm$^2$).  The vector measurement weights were $k_i=0.1$, and the observer gains were $\Lambda_i=10I_{3}$ for $i=1,2,3$, $\gamma_f=1000$, $\mu_b=1$. The adaptive controller gains were  $K_c=3I_{3}$,  $\lambda_c=1$, 	$\alpha_1=0.1$, $\alpha_2=0.01$, $\Gamma=I_{6}$.

The gyro bias was $b=[0.2,0.1,-0.1]^{T}$ (rad/s). Noise was introduced in the vector measurements and gyro measurements as follows. Vector measurements:  $v_{m,i}=(v_{i}+m_{v}\bar{\nu}_{i})/\| v+m_{v}\bar{\nu}_{i} \|$, $\bar{\nu}_{i}=\nu_{i}/\|\nu_{i}\|$, where $\nu_{i}\in\mathbb{R}^{3}$ are zero-mean Gaussian distributions with unitary variance, for $i=1,2,3$, and $m_{v}\in [0,0.1]$ is a uniform distribution; Gyro rate measurement: $\omega_{m} = \omega + m_{w}\nu_{w} + b$, with  $\nu_{w}\in\mathbb{R}^{3}$ a zero-mean Gaussian distributions and unit-variance, and $m_{w}\in [0,0.1]$ a uniform distribution.

The tracking error $e=q\otimes q_d^{-1}$ is shown in Fig. \ref{fig:NysCase}(a), where the convergence of $e_{0}$ to the set $[-0.99,-1]$ is observed. The norm of the error variable $\|z\|$ is displayed in Fig. \ref{fig:NysCase}(b), which remains in the set $[0,0.02]$ after $20$ (s). Besides, the bias estimation error $\|\tilde b\|$ and the estimated composite error $\|\hat\sigma\|$ are depicted in Fig. \ref{fig:NysCase}(c)-(d). Note that they converge to the set $[0,0.2]$, corresponding to the sum of the maximum noise levels in the vector and gyro measurements. Fig. \ref{fig:NysCase}(e) draws the norm of the estimation error $\|\tilde{\theta}\|$, which converges to the set $[0,0.02]$ under the excitation provided by the desired trajectory. The control   effort is shown in  Fig. \ref{fig:NysCase}(f), which is maintained in $[0,1]$ (Nm). 

\begin{figure}[!t]
	\centering
		\includegraphics[trim = 17mm 0mm 17mm 0mm,clip,scale=0.3]{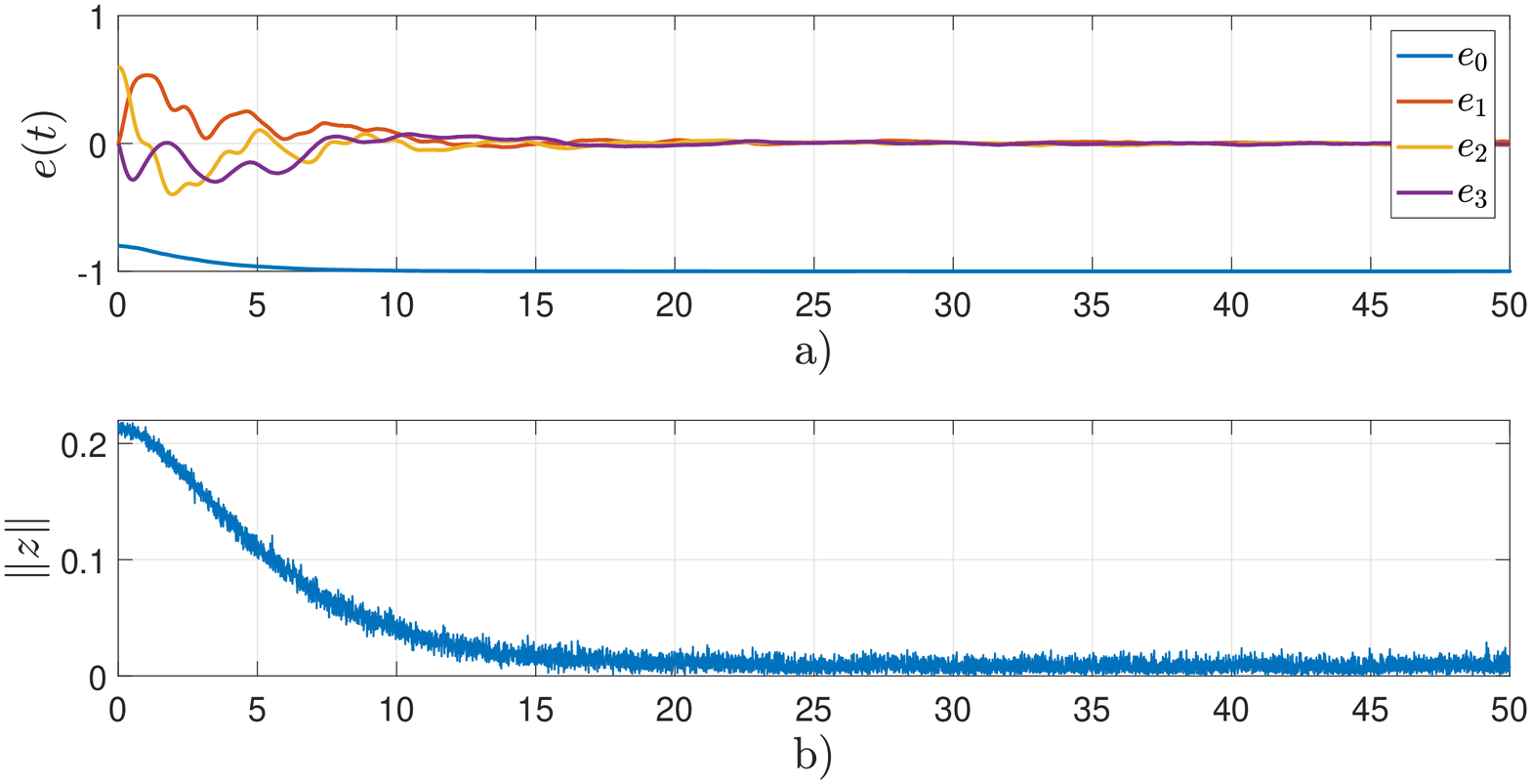}
		\includegraphics[trim = 17mm 0mm 17mm 0mm,clip,scale=0.3]{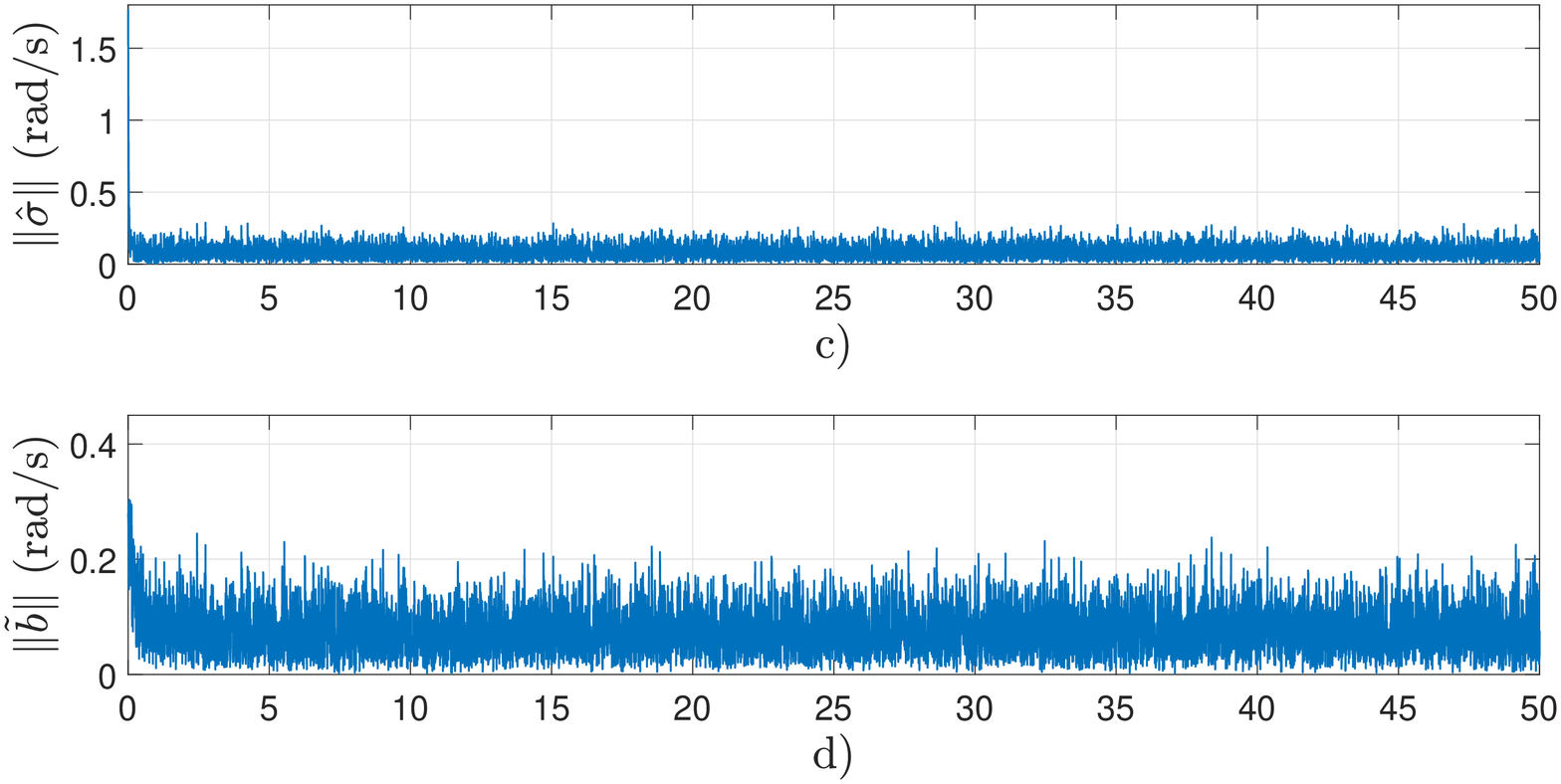}
		\includegraphics[trim = 17mm 0mm 17mm 0mm,clip,scale=0.3]{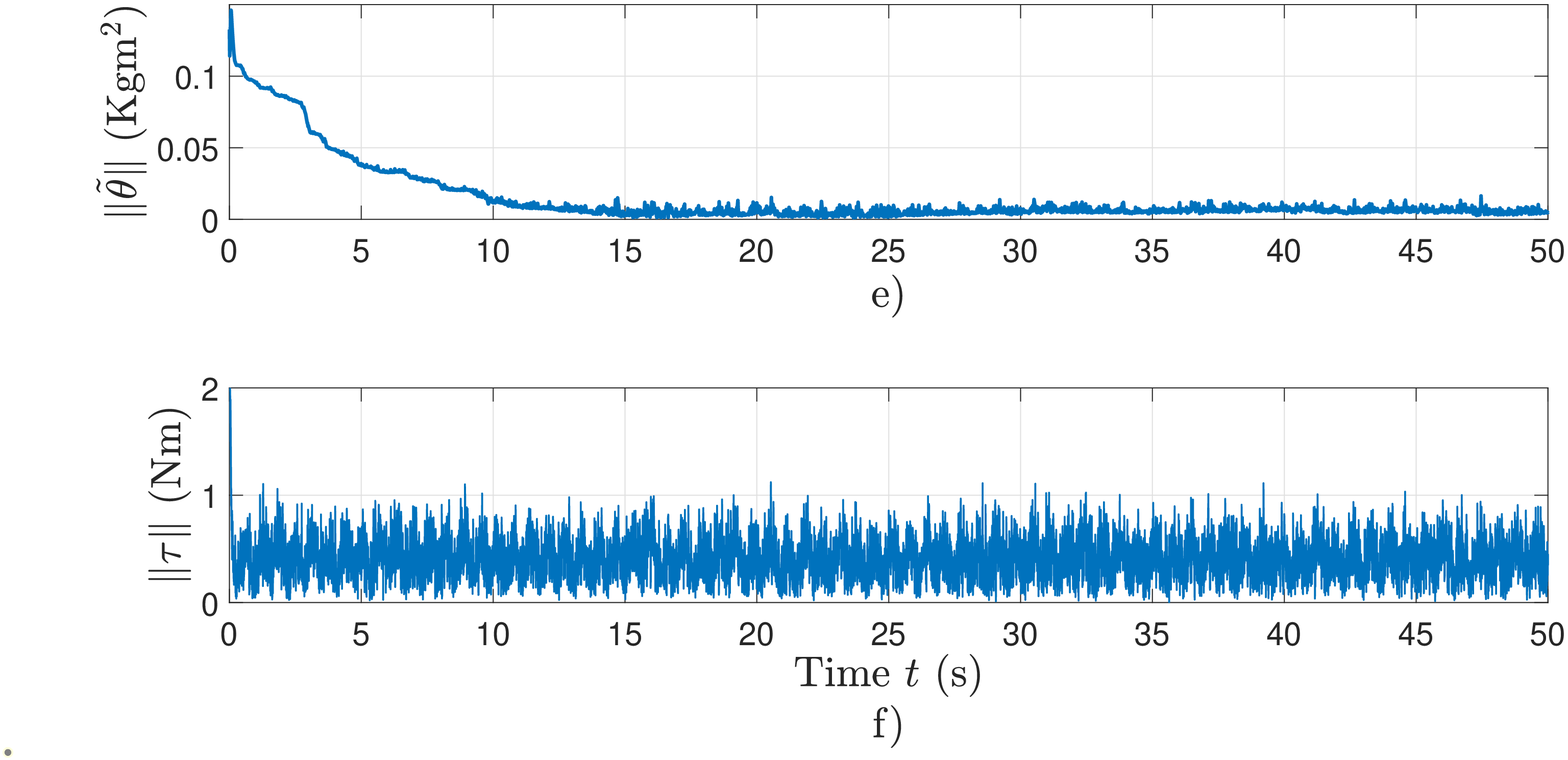}
		\caption{Performance of the adaptive controller  \eqref{eq:CtrlAdp} in the presence of noisy measurements.}
		\label{fig:NysCase}
\end{figure}

\section{Conclusion}\label{Sec:Conc}

In this note, attitude tracking using vector and biased gyro measurements was considered. First, a globally exponentially convergent bias observer was designed. Then, an attitude tracking controller based on the bias observer was developed which is almost globally asymptotically stable and almost semiglobal exponentially stable. A separation property of the combined observer-controller was proved. Lastly, an adaptive controller was devised to deal with the unknown inertia matrix, which estimates only the inertia parameters without over-parameterization thanks to a modified bias observer. Almost global asymptotic stability was demonstrated. Numerical simulations were included to illustrate the theoretical results and verify the robustness of the adaptive controller in the presence of measurement noise. In addition, by directly using the vector measurement, the proposed attitude controllers do not require attitude estimation, which avoids both the undesired critical points introduced by an attitude observer and the unwinding phenomenon in quaternion-based designs. %The proposed adaptive controller needs no over-parameterization, which facilitates the implementation of the adaptive control law and relaxes the smoothness requirement of the desired trajectory.  

% if have a single appendix:
%\appendix[Proof of the Zonklar Equations]
% or
%\appendix  % for no appendix heading
% do not use \section anymore after \appendix, only \section*
% is possibly needed

% use appendices with more than one appendix
% then use \section to start each appendix
% you must declare a \section before using any
% \subsection or using \label (\appendices by itself
% starts a section numbered zero.)
%

\appendices
\section{Proof of  Lemma \ref{lem2}} \label{App1}

\begin{enumerate}
\item[1)] Recall the attitude error $ R_e = RR^{T}_{d}$, which is associated with its quaternion parameterization $e=\left[e_{0},e^{T}_{v}\right]^{T}\in\mathcal{S}^{3}$ by Rodriguez formula $R_e=R(e)$ in \eqref{eq:Rodrigues}.  Then it follows from \eqref{eq:CfgErr1}, $v_i=R^Tr_i$, and $v_{d_i}=R_d^Tr_i$ %, and the property of the skew-symmetric matrix $S(u)v=-S(v)u$ $\forall u,v\in \mathbb R^3$,  
that 
\begin{align*}
    e_{R} 
    &= \sum^{n}_{i=1}k_{i}\left( 1 - v^{T}_{i}v_{d_i}\right)= \sum^{n}_{i=1}k_{i}\left( 1 - r^{T}_{i}R_{e}r_{i}\right) \\
    &= \sum^{n}_{i=1}k_{i}\big( 1 - r^{T}_{i} \left( I_{3} + 2e_{0}S(e_{v}) + 2S^{2}(e_{v})\right) r_{i}\Big) \\
    \quad &= -\sum^{n}_{i=1}k_{i}\left(r^{T}_{i} \left( 2S^{2}(e_{v})\right) r_{i}\right) \\
    &= 2e^{T}_{v}\left( -\sum^{n}_{i=1}k_{i}S^{2}(r_{i})\right) e_{v} = 2e^{T}_{v}W e_{v}.
\end{align*}    
%Clearly, for $e_{R} = \sum^{n}_{i=1}k_{i}\left( 1 - v^{T}_{i}v_{d_i}\right)$ it is held $e_{R}\leq 2\sum_{i=1}^n k_i \iff v_{i} = -v_{d_i}$, i.e., $2\sum_{i=1}^n k_i$ is the absolute maximum of $e_{R}$. On the other hand, for $e_{R} = 2e^{T}_{v}W e_{v}$, and some $e_{v}\in\mathcal{S}^{2}$, it is obtained $e_{R}\leq 2\lambda_{\max}(W)\|e_{v}\|^{2}=2\lambda_{w,1}$. Therefore, it yields $\lambda_{w,1}\leq \sum_{i=1}^n k_i$. Notice that $e_{R}=2e^{T}_{v}W e_{v}=2\lambda_{w,1} \iff e_{v}=v_{w,1}$, where $v_{w,j}$, $j=1,2,3$, are the unit eigenvectors of $W$ associated with the eigenvalues $\lambda_{w,j}$, i.e., $W v_{w,j} = \lambda_{j}v_{w,j}$, ordered as $\lambda_{w,1}\geq\lambda_{w,2}\geq\lambda_{w,3}>0$. Finally, $e_{R}=0 \iff v_{i}=v_{d_{i}}$ and $e_{v} =0_{3\times 1}$. Later, $e_{R}=0 \iff e=\pm \hat{1}$.
Clearly, $0\leq e_{R} = \sum^{n}_{i=1}k_{i}\left( 1 - v^{T}_{i}v_{d_i}\right)\leq 2\sum^{n}_{i=1} k_i$. Furthermore,  $e_R=0\iff v_i=v_{d_i}\iff e=[\pm 1, 0_{1\times 3}]^T$;  $e_{R}=2\sum_{i=1}^n k_i \iff v_{i} = -v_{d_i}$. In particular, $e_{R}=2e^{T}_{v}W e_{v}=2\lambda_{w,1} \iff e_{v}=v_{w,1},$ where $v_{w,j}$, $j=1,2,3$, are the unit eigenvectors of $W$ associated with the eigenvalues $\lambda_{w,j}$, i.e., $W v_{w,j} = \lambda_{j}v_{w,j}$, ordered as $\lambda_{w,1}\geq\lambda_{w,2}\geq\lambda_{w,3}>0$. %It follows from $e_R=2e^{T}_{v}W e_{v}$ that $ e_v=v_{\omega,j}\implies e_R=2\lambda_{\omega,j}$.

%$0=2\min_{e_v\in \mathcal S^2} \{e^{T}_{v}W e_{v}\} \leq e_R \leq 2\max_{e_v\in \mathcal S^2} \{e^{T}_{v}W e_{v}\}$. Furthermore, for $i = 1, 2, \cdots , n$,  $e_R=0\iff v_i=v_{d_i}\iff e=[\pm 1, 0_{1\times 3}]^T$;  $e_{R}=2\max_{e_v\in \mathcal S^2} \{e^{T}_{v}W e_{v}\}=2\sum_{i=1}^n k_i \iff v_{i} = -v_{d_i}$.  The maximum is reached when $e_v=v_{w,1}$, resulting in $\max_{e_v\in \mathcal S^2} \{e^{T}_{v}W e_{v}\}=\lambda_{\omega,1}$,  where $v_{w,j}$, $j=1,2,3$, are the unit eigenvectors of $W$ associated with the eigenvalues $\lambda_{w,j}$, i.e., $W v_{w,j} = \lambda_{j}v_{w,j}$, ordered as $\lambda_{w,1}\geq\lambda_{w,2}\geq\lambda_{w,3}>0$. Therefore, $\lambda_{w,1}=\sum^{n}_{i=1} k_i$.  

\item[2)] By $\dot v_i=S(v_i)\omega$ and $\dot v_{d_i}=S(v_{d_i}) \omega_d$, the time derivative of $e_R=\sum^{n}_{i=1}k_{i}\left( 1 - v^{T}_{i}v_{d_i}\right)$  is 
\begin{align*}
    \dot{e}_{R} 
    &= -\sum^{n}_{i=1}k_{i}\left( v^{T}_{i}S(v_{d_i})\omega_{d} + v^{T}_{d_i}S(v_{i})\omega \right) \\
    &= -\sum^{n}_{i=1}k_{i}\left(  v^{T}_{d_i}S(v_{i}) (\omega - \omega_{d}) \right) \\
    &= z^{T}(\omega - \omega_{d}).
\end{align*}
%after adding and subtracting \eqref{eq:OmR} to $\omega$, and by \eqref{eq:OmTilD} it gives
%\begin{align*}
%    \dot{e}_{R} &= z^{T}(\omega + \omega_{r} - \omega_{r}  - \omega_{d}) \\
%     &= z^{T}\sigma + z^{T}( \omega_{r} - \omega_{d} ) \\
%    &= z^{T}\sigma - \lambda_{c} z^{T} z.
%\end{align*}

\item[3)]The proof is given by Lemma 1 and Lemma 3 of \cite{tayebi2013inertial}.
\item[4)] By the property of the skew-symmetric matrix, the time derivative of the error variable $z$ in \eqref{eq:z} is 
\begin{align*}
    \dot{z} &= \sum^{n}_{i=1}k_{i} \left( S(v_i) \dot{v}_{d_i} - S(v_{d_i})\dot{v}_i \right) \\
    &= \sum^{n}_{i=1}k_{i} \left( S(v_i) S(v_{d_i})\omega_{d} - S(v_{d_i})S(v_{i})\omega \right) \\
    &= \sum^{n}_{i=1}k_{i} \big( \left( S(v_i) S(v_{d_i}) - S(v_{d_i})S(v_{i}) \right)\omega_{d} \\
    &\quad - \left( S(v_{d_i})S(v_{i}) \right) (\omega - \omega_d ) \big) \\
    &= \sum^{n}_{i=1}k_{i} \big( S\left( S(v_i) v_{d_i}\right)\omega_{d} \big) + J(\omega - \omega_{d}) \\
    &= S(z)\omega_{d} + J(\omega - \omega_{d}).
\end{align*}

\item[5)] By \eqref{eq:ze} the norm $\|z\|^{2}=z^Tz$  is given by
\begin{align}\label{eq:N2z}
    \frac{1}{2}\|z\|^{2} 
    &= \frac{1}{2}\left( 2 e^{T}_{v} W\left( e_{0}I_{3} + S(e_{v})\right) R_{d}\right)  \notag\\
    &\quad\quad\quad \left( 2R^{T}_{d} \left( e_{0}I_{3} - S(e_{v})\right) W e_{v} \right) \notag \\
    &= 2 e^{T}_{v}W^{2}e_{v} - 2 \left( e^{T}_{v} W e_{v} \right)^{2}.
\end{align}
It is easy to see that the equality of \eqref{eq:bcond} holds when $e =[e_0, e^T_v]=[\pm 1, 0_{1\times 3}]$, which corresponds to $R=R_{d}$. %On the other hand, as shown in the item 1) in this proof the maximum  of $e_R$ in \eqref{eq:bcond} is reached when $e = \bar e:= [0, v^{T}_{w,1}]^{T}$, being $v_{w,1}$ the unit eigenvector of $W$ associated with the largest eigenvalue $\lambda_{w,1}$.
%since $\| z \|^{2}=0$ and $e_r = 2\lambda^{2}_{M}$. 
Now, consider the undesired equilibria  $e_{j} = [0, v^{T}_{w,j}]^{T}$, being $v_{w,j}$, $j=1,2,3$, the $j$-th unit-eigenvector of $W$. Given any $ 1>\epsilon>0$,  applying an arbitrarily small  rotation $ \delta_{j} = \left[ \left( \sqrt{1-\epsilon^{2}}\right), \epsilon v^{T}_{w,j}\right]^{T}$  to  $e_{j}$ yields
\begin{equation}\label{eq:SmllR}
    e^{*}_{j} = [e^{*}_{0,j},\; e^{*T}_{v,j}]= e_{j} \otimes \delta_{j} = \left[ -\epsilon , \left( \sqrt{1-\epsilon^{2}}\right) v^{T}_{w,j}\right]^{T}.
\end{equation}
%where the operator $\otimes$ computes the product of quaternions.
Then, the error variable $z$ in  \eqref{eq:ze} and $e_R$ in \eqref{eq:CfgErr} evaluated at $e^{*}_{j}$ and $e^{*}_{i}$ respectively, are 
\begin{align}
    \frac{1}{2}\|z^{*}\|^{2} &= 2 e^{*T}_{v,j}W^{2}e^{*}_{v,j} - 2 \left( e^{*T}_{v,j} W e^{*}_{v,j} \right)^{2} \notag\\
    &= 2\epsilon^{2}\lambda^{2}_{w,j}\left( 1-\epsilon^{2} \right) ,\label{eq:NormZ} \\
    e^{*}_{R} &= 2 e^{*T}_{v,i} W e^{*}_{v,i} = 2 \left( 1-\epsilon^{2} \right)\lambda_{w,i}. \label{eq:NormEr}
\end{align}
Therefore, $\forall i,j=1,2,3$, and 
%for any $ \beta \geq 2\frac{\lambda_{w,1}}{\epsilon^{2}\lambda^{2}_{w,3}}\alpha_{1} \implies \frac{\beta}{2}2\epsilon^{2}\lambda^{2}_{w,j}\left( 1-\epsilon^{2} \right) &\geq \alpha_{1} 2 \left( 1-\epsilon^{2} \right)\lambda_{w,i}\implies \frac{\beta}{2}\|z^{*}\|^{2} &\geq \alpha_{1}e^{*}_{R} $.
\begin{align*}
&\forall \beta \geq 2\frac{\lambda_{w,1}}{\epsilon^{2}\lambda^{2}_{w,3}}\alpha_{1} \\
&\implies   \frac{\beta}{2}2\epsilon^{2}\lambda^{2}_{w,j}\left( 1-\epsilon^{2} \right) \geq \alpha_{1} 2 \left( 1-\epsilon^{2} \right)\lambda_{w,i} ,\\
&\implies   \frac{\beta}{2}\|z^{*}\|^{2} \geq \alpha_{1}e^{*}_{R}.
\end{align*}
%which is satisfied for any $\alpha_{1}>0$ provided that 
%\begin{equation}\label{eq:betaCond}
%    \beta \geq 2\frac{\lambda_{w,1}}{\epsilon^{2}\lambda^{2}_{w,3}}\alpha_{1} ,
%\end{equation}
Since this is the worst case for the inequality \eqref{eq:bcond} to be held,  this inequality  fulfills for all $z\neq 0_{3\times 1}$, which is equivalent to $e \in  \mathcal{S}^{3} \backslash \bigcup B_j$. 
%Substituting \eqref{eq:NormZ}-\eqref{eq:NormEr} in the condition \eqref{eq:bcond}, it follows
%\begin{align*}
%    \beta \frac{1}{2}\| z^{*} \|^{2} &\geq \alpha_{1}e^{*}_{r} \\
%    \beta 2\epsilon^{2}\lambda^{2}_{M}\left( 1-\epsilon^{2} \right) &\geq 2 \alpha_{1}\left( 1-\epsilon^{2} \right)\lambda_{M} \\
%    \beta \epsilon^{2}\lambda_{M} &\geq \alpha_{1}
%\end{align*}
%therefore, always exist $\beta >0$ defined as
%\begin{equation}\label{eq:beta}
%    \beta \geq \frac{\alpha_{1}}{\epsilon^{2}\lambda_{M} }
%\end{equation}
%for any $\alpha_{1}>0$, $\lambda_{M}>0$ and, $\forall\epsilon \in\mathbb{R}$ arbitrarily small such that \eqref{eq:bcond} is verified. 
\end{enumerate}

%\section{Matrix $F_{i}$ for Regressor Definition}\label{App2}
%For all $u= \left[ u_{1},u_{2},u_{3} \right]^{T}\in\mathbb{R}^{3}$, define the map $F_{1}(\cdot):\mathbb{R}^{3}\to \mathbb{R}^{3\times 6}$ as
%\begin{align*}
%F_{1}(u) &= \left[ \begin{array}{cccccc}
%u_{1} & 0 & 0 & 0 & u_{3} & u_{2}  \\
%0 & u_{2} & 0 & u_{3} & 0 & u_{1}\\
%0 & 0 & u_{3} & u_{2} & u_{1} & 0 
%\end{array} \right] .
%\end{align*}

% use section* for acknowledgment
%\section*{Acknowledgment}
%This work was supported in part by CONACyT under grant 253677 and PAPIIT-UNAM IN112421, and carried out in the National Laboratory of Automotive and Aerospace Engineering LN-INGEA.

% Can use something like this to put references on a page
% by themselves when using endfloat and the captionsoff option.
\ifCLASSOPTIONcaptionsoff
  \newpage
\fi

% trigger a \newpage just before the given reference
% number - used to balance the columns on the last page
% adjust value as needed - may need to be readjusted if
% the document is modified later
%\IEEEtriggeratref{8}
% The "triggered" command can be changed if desired:
%\IEEEtriggercmd{\enlargethispage{-5in}}

% references section

% can use a bibliography generated by BibTeX as a .bbl file
% BibTeX documentation can be easily obtained at:
% http://mirror.ctan.org/biblio/bibtex/contrib/doc/
% The IEEEtran BibTeX style support page is at:
% http://www.michaelshell.org/tex/ieeetran/bibtex/
\bibliographystyle{IEEEtran}
% argument is your BibTeX string definitions and bibliography database(s)
%\bibliography{IEEEabrv,../bib/paper}
\bibliography{VectorControl}
\end{document}